\DeclareMathOperator{\E}{\mathbbmss{E}}
\DeclareMathOperator{\Prob}{\mathbbmss{P}}
\DeclareMathOperator{\Trn}{\mathsf{T}}
\newtheorem{theorem}{Theorem}
\newtheorem{definition}{Definition}
\newtheorem{lemma}{Lemma}
\newtheorem{remark}{Remark}
\begin{document}
\title{On the Separability of Parallel MISO Broadcast Channels Under Partial CSIT: A Degrees of Freedom Region Perspective
\footnotetext{H. Joudeh is with the  Communications and Information Theory Group, Faculty of Electrical Engineering and Computer Science, Technische Universit\"{a}t Berlin, 10587 Berlin, Germany (e-mail: h.joudeh@tu-berlin.de).
B. Clerckx is with the Communications and Signal Processing Group, 
Department of Electrical and Electronic Engineering, Imperial College London, London SW7 2AZ, U.K. 
(e-mail: b.clerckx@imperial.ac.uk).
\\
This work was partially supported by the U.K. Engineering and Physical Sciences Research Council (EPSRC)
under grants EP/N015312/1 and EP/R511547/1. Parts of this paper were presented at the 2017 IEEE GLOBECOM \cite{Joudeh2017} and the 2019 IEEE SPAWC \cite{Joudeh2019}.}}
\author{Hamdi~Joudeh and Bruno~Clerckx}
\date{}
\maketitle
\begin{abstract}
We study the $K$-user, $M$-subchannel parallel multiple-input-single-output (MISO) broadcast channel (BC)
under arbitrary levels of partial channel state information at the transmitter (CSIT).
We show that the parallel subchannels constituting this setting are separable from a degrees-of-freedom (DoF) region perspective
\emph{if and only if} the partial CSIT pattern is \emph{totally ordered}. This total order condition corresponds to users abiding by the same order, with respect to their CSIT
quality levels, in each of the parallel subchannels.
For instance, let $\alpha_{k}^{[l]}$ and $\alpha_{j}^{[l]}$ be the CSIT quality parameters for users $k$ and $j$ over subchannel $l$. Under total order,  
having $\alpha_{k}^{[l]} \geq \alpha_{j}^{[l]}$ implies that $\alpha_{k}^{[m]} \geq \alpha_{j}^{[m]}$ holds for every subchannel $m$.
In this case, the entire DoF region is achievable using simple separate coding, where a single-subchannel-type transmission scheme is employed in each subchannel.
To show this separability result, we first derive an outer bound for the DoF region by extending the aligned image sets approach of Davoodi and Jafar
to the considered setting.
We then show that this outer bound coincides with the inner bound achieved through separate coding, given by the Minkowski sum of
$M$ single-subchannel DoF regions, under the total order condition, hence settling the \emph{if} part of the main theorem.
To prove the \emph{only if} part of the theorem, we identify a set of DoF tuples achievable through joint coding across subchannels, yet not achievable through separate coding whenever
the total order condition is violated.
Moreover, we also highlight the implications of our main result on the design of CSIT feedback schemes for multi-carrier multi-antenna
wireless networks.
\end{abstract}
\newpage
\section{Introduction}
Degrees-of-freedom (DoF) studies for wireless networks seek to characterize
the optimal number of interference-free signalling dimensions accessible at each receiver in the
asymptotically high signal to noise ratio (SNR) regime.
While caution must be practiced in translating DoF findings into practical
insights, such findings nevertheless serve as a crude first step along a path of refinements
towards understanding the information-theoretic capacity limits of wireless networks \cite{Davoodi2017}.

A prevalent assumption in initial DoF studies, which mainly focus on interference and multi-antenna wireless networks, was that of the availability of perfect channel state
information at the transmitters (CSIT).
Not long after, however, it became clear that such overly optimistic assumption is difficult to satisfy in practical systems, largely due to the
fading nature of
wireless channels.
This prompted a shift of focus in DoF studies towards incorporating various forms of CSIT imperfections, including: absent instantaneous CSIT \cite{Huang2012,Vaze2012,Rassouli2015}, compound CSIT \cite{Maddah-Ali2010,Gou2011},
finite precision and partial instantaneous CSIT \cite{Davoodi2016,Hao2017a,Davoodi2019},
delayed CSIT \cite{Maddah-Ali2012}, mixed delayed and partial instantaneous CSIT \cite{Gou2012,Yang2013,Chen2013a},
hybrid and alternating CSIT \cite{Tandon2013,Rassouli2016,Lashgari2016}, and topological CSIT \cite{Jafar2014,Yi2015}.
\subsection{MISO BC under Partial CSIT}
\label{subsec:intro_MISO_BC}
As seen through a number of the above-mentioned works, the multiple-input-single-output (MISO) broadcast channel (BC) has been considered a canonical setting for investigating the impact of CSIT inaccuracies on the DoF of wireless networks.
The capacity region (and hence the DoF region) of this channel is well-understood under the
idealistic assumption of perfect CSIT \cite{Weingarten2006}, which in turn provides a
firm starting point for studies that consider more relaxed CSIT assumptions.
Moreover, the earliest observations on the fundamental role of CSIT in interference management were noted through studying this channel \cite{Lapidoth2005,Jindal2006,Weingarten2007}, gaining it a central status in such analysis, although many such early observations were
in the form of conjectures that were settled some years after \cite{Maddah-Ali2010,Gou2011,Davoodi2016}.
The fact that DoF results in the MISO BC constitute outer bounds for more
intricate settings, as the interference channel (IC) and the X channel \cite{Davoodi2017},
also adds to its significance in the development of our understanding of the role of CSIT in wireless networks.

Amongst the various models of CSIT imperfections, the partial instantaneous CSIT model
has become of particular research interest over the recent few years.
Considering the MISO BC under this model, the transmitter is assumed to have access to an erroneous estimate of each user's channel vector, while estimation error terms are assumed to scale as $O(\mathrm{SNR}^{-\alpha_{k}})$,
where $\alpha_{k} \in [0,1]$ is a parameter that captures the CSIT quality level for user $k$.
For instance, $\alpha_{k} = 0$ represents finite precision CSIT, which reduces to no knowledge at the transmitter (N), while $\alpha_{k} = 1$ amounts to perfect channel knowledge (P), both from a DoF viewpoint.
The challenging nature of DoF studies under this CSIT uncertainty model is epitomized by the Lapidoth-Shamai-Wigger conjecture that the sum-DoF of the 2-user MISO BC collapses to $1$ under finite precision CSIT (i.e. $\alpha_{k} = 0$),
which remained open for nearly a decade \cite{Lapidoth2005}.
This conjecture was finally proved by Davoodi and Jafar in a seminal work in which they introduced a novel converse argument named the aligned image sets (AIS) approach \cite{Davoodi2016}.
In particular, Davoodi and Jafar derived an upper bound for the sum-DoF of the $K$-user MISO BC under arbitrary levels of partial CSIT, given by
\begin{equation}
\label{eq:sum_DoF_UB}
d_{\Sigma} \leq  1 + \alpha_{2} + \cdots + \alpha_{K}
\end{equation}
where it is assumed, without loss of generality, that $\alpha_{1} \geq \alpha_{k}$ for all $k$.
It is evident that the DoF collapse to $1$ under finite precision CSIT, as conjectured in \cite{Lapidoth2005}, follows as a special case of the upper bound in \eqref{eq:sum_DoF_UB}.
As for the opposite direction, the achievability of \eqref{eq:sum_DoF_UB} was shown using a scheme based on rate-splitting, with a superposition of zero-forcing and multicasting signals, proposed in \cite{Yang2013} for the 2-user setting and generalized to $K$-user settings in \cite{Clerckx2016} and references therein.

Once equipped with the upper bound in \eqref{eq:sum_DoF_UB}, a polyhedral outer bound for the
entire DoF region is easily constructed by bounding the sum-DoF of each subset of users, while eliminating remaining users.
With this outer bound in hand, the main challenge in going from a sum-DoF characterization to an entire DoF region characterization
becomes the achievability side of the argument.
In particular, the rate-splitting scheme used to achieve the sum-DoF in \eqref{eq:sum_DoF_UB} is, in general,
specified by several design variables for power control and common DoF assignment.
While such design variables can be optimized to obtain a DoF tuple that maximizes a certain
scalar objective function, e.g.  the sum-DoF \cite{Joudeh2016} or the symmetric-DoF \cite{Joudeh2016a},
the entire achievable DoF region is generally described as the collection of DoF tuples achieved
through all combinations of feasible design variables.
Proving achievability hence requires matching the DoF region achieved through rate-splitting, described using a mixture of CSIT parameters and
auxiliary design variables, to the outer bound, expressed in terms of CSIT parameters only.
This was accomplished by Piovano and Clerckx in \cite{Piovano2017} through an exhaustive characterization of all faces describing the outer bound region, and then prescribing tuned strategies that attain all DoF tuples in each such face.

The partial CSIT model described above can be further enriched by allowing CSIT levels to vary not only across users,
but also across signalling dimensions.
By doing so for the MISO BC, we enter the realm of a more intricate and far less understood setting:
the parallel MISO BC under partial CSIT, which is the main focus of this work.
In particular, we consider a $K$-user, $M$-subchannel setting, where transmission occurs over $M$ parallel subchannels,
and we further assume arbitrary levels of partial CSIT for each subchannel $m$, encompassed by the state $(\alpha_{1}^{[m]},\ldots,\alpha_{K}^{[m]}) \in [0,1]^{K}$.
\subsection{Parallel MISO BC under Partial CSIT and Inseparability}
A key issue that arises when studying parallel channel models, which are motivated by fading wireless channels, is \emph{separability}.
This is defined as the optimality of independent coding over subchannels (or fading states), in which each subchannel is treated as a stand-alone network, subject to a joint power constraint across subchannels.
%
Under perfect CSIT, the parallel MISO BC is separable in the strongest sense, i.e.
with respect to its \emph{entire capacity region} \cite{Mohseni2006}.
Separability, however, which also holds for point-to-point and multiple-access channels, turns out to be ``\emph{more of an exception than a rule for wireless networks in general and interference networks in particular}" \cite{Jafar2011}.
As explained in \cite[Ch. 4.4]{Jafar2011}, the main causes for the inseparability of parallel wireless channels in general, revealed
by studying the IC and X channel, are: 1) \emph{antidote links}, which cannot carry desired signals but may deliver signals useful for
interference cancellation, and 2) \emph{interference alignment}, enabled by
alternating network topologies arising from channel state variation.

Under partial CSIT, the parallel MISO BC ceases to be separable in general.
Looking through the DoF lens adopted in this work, it is seen that the luxury of creating non-interfering links through zero-forcing is lost  under CSIT imperfections in general, and as the MISO BC starts to inherit features from the IC and X channel,
the above causes of inseparability come into play.
This is best exemplified by the $2$-user, $2$-subchannel setting with a PN,NP CSIT pattern \cite{Tandon2013},
i.e. CSIT states given by $(1,0)$ and $(0,1)$ as shown in Fig. \ref{fig:topologies}(a).
In \cite{Tandon2013}, Tandon et al. showed that by jointly coding over these $2$ subchannels, a sum-DoF of $3/2$ is achieved, which is also optimal. 
This strictly outperforms separate coding over each state, which achieves a sum-DoF of $1$ at most.\footnote{Interestingly, the fact that separate coding achieves $1$ DoF at most over the CSIT state $(1,0)$, or equivalently $(0,1)$, was mentioned as a conjecture in \cite[Example 3]{Tandon2013}. This was settled in the affirmative later on in \cite{Davoodi2016} (see \eqref{eq:sum_DoF_UB}).}
Other examples of inseparability in settings with more than 2 users and 2 subchannels, or 2-user settings with arbitrary levels of partial CSIT, are given in \cite{Rassouli2016,Hao2013,Chen2013}.

While it is understood that the $K$-user, $M$-subchannel parallel MISO BC is inseparable in general under arbitrary levels of partial CSIT,
a comprehensive understanding of its DoF region is still missing.
Consider the $2$-user, $M$-subchannel special case for instance. An achievable DoF region
attained through joint coding across subchannels is given in \cite{Hao2013},
yet the optimality of such region has not been successfully established.
Furthermore, insights into the daunting complexity
incurred in going beyond 2-user settings, even when restricting to PN-CSIT patterns with $\{0,1\}$ CSIT qualities,
are seen through the examples in \cite{Rassouli2016}.
The formidable nature of this problem of interest, in its generality, motivates a more tractable approach, which we take in this paper.
Instead of striving to characterize the DoF region in the general case, we ask the question of whether there exists a broad regime of parameters in which it is achieved through simple separate coding.
\begin{figure}[t]
\vspace{-1.0cm}
\centering
\includegraphics[width = 0.9\textwidth,trim={2.5cm 12.2cm 2.5cm 0.0cm},clip]{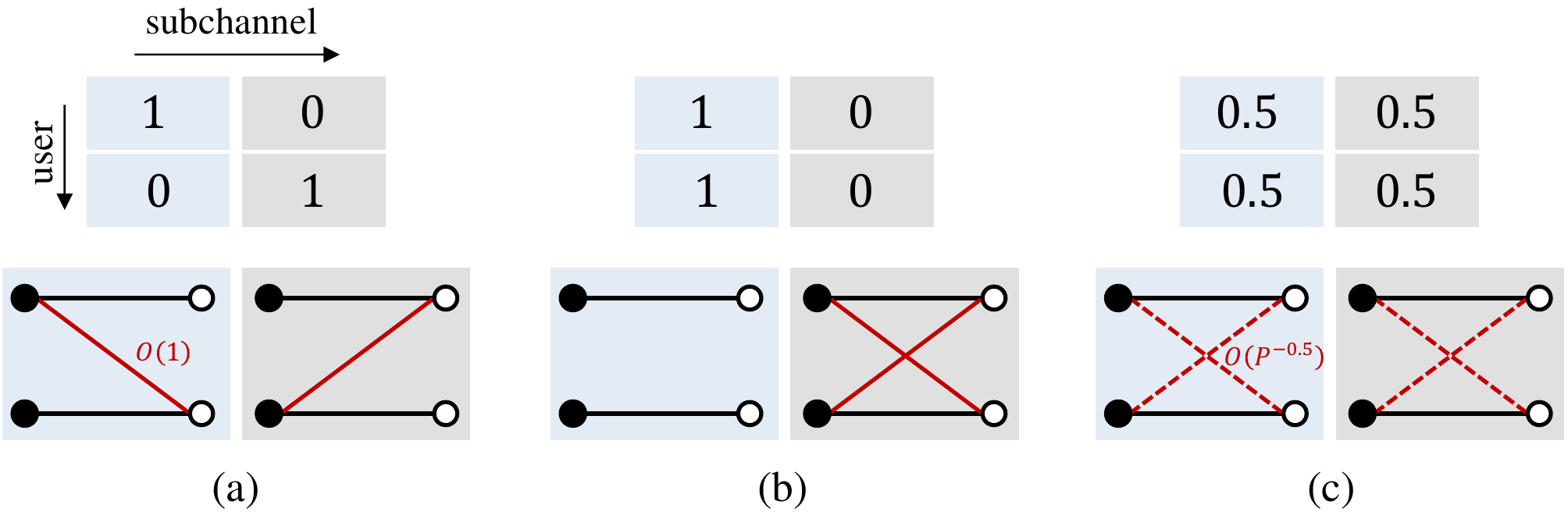}
\caption{\small
Three CSIT patterns for the $2$-user, $2$-subchannel MISO BC (top) and their effective network topologies after
zero-forcing using the possibly inaccurate available CSIT (bottom).
The average CSIT quality for each user is $0.5$ in all settings.
The sum-DoF for all settings is $3/2$,
achieved in (a) using zero-forcing and inter-subchannel common signalling \cite{Tandon2013}, in (b) using zero-forcing in subchannel 1 and common signalling in subchannel 2, and in
(c) using rate-splitting with a superposition of zero-forcing and common signalling in each subchannel \cite{Yang2013}. The setting in (a) is inseparable, while both (b) and (c) are separable.}
\label{fig:topologies}
\end{figure}
\subsection{Overview of Contribution and Organization}
\label{subsec:intro_overview_contrib}
Separability of parallel subchannels is a desirable attribute from a practical standpoint, as it can greatly simplify
coding and multiple access in wireless networks under fading channel conditions.
This desirability of, and need for, simplicity in general has fueled a number of works of late, in
which the optimality of simple coding schemes has been established in broad regimes for various
settings\footnote{Optimality is often established in a DoF, generalized degrees-of-freedom (GDoF), or constant-gap capacity sense, depending on the considered setting and its tractability.} \cite{Geng2015,Sun2016,Joudeh2019b,Yi2018,Davoodi2018a,Joudeh2019a,Chan2019}.
Such simple schemes also tend to be more robust, as they rely less on the fine details of CSIT, making them
all the more attractive for practical purposes.
On the other hand, from a theoretical perspective, the approach of pursuing optimality conditions
for simple schemes has allowed progress on questions which are still open in their generality.
Our pursuit of separability conditions for the parallel MISO BC under partial CSIT can be viewed
in the same spirit as these previous works that focus on studying the optimality of simple schemes.

The main result of this paper is showing that under partial CSIT, the $K$-user, $M$-subchannel parallel MISO BC is separable from an
entire DoF region perspective \emph{if and only if} the corresponding partial CSIT pattern is \emph{totally ordered}.
This condition corresponds to users abiding by the same order, with respect to their CSIT parameters, in each of the parallel subchannels.
In other words, the monotonic order of parameters in a given state $(\alpha_{1}^{[m]},\ldots,\alpha_{K}^{[m]})$ must hold in all other states (see Definition \ref{def:total order} in Section \ref{subsec:partial_CSIT_model}).
In the light of our discussion of inseparability and its causes in the previous subsection, this total order condition for separability
seems natural.
In particular, totally ordered CSIT patterns give rise to parallel subchannels with effectively non-alternating network topologies, see for example
Fig. \ref{fig:topologies}(b) and (c), which in turn, do not provide alignment opportunities arising in alternating topologies, as in Fig. \ref{fig:topologies}(a).
However, despite this intuitive nature of our main result,
showing that it holds is not a straightforward exercise as highlighted in what follows.

The first step towards proving the main result stated above is deriving an outer bound.
To this end, in Section \ref{sec:converse} we extend the AIS approach \cite{Davoodi2016} and derive a sum-DoF upper bound for
the multi-subchannel setting.
In its essence, the AIS approach relies on a combinatorial accounting of the maximum number of codewords that can be aligned
at an undesired receiver while remaining distinguishable at a desired receiver, under partial CSIT.
As transmissions in the considered setting take place over parallel subchannels with arbitrary channel uncertainty levels, the
CSIT quality for any given user may vary across the span of one codeword.
This induces variations in the probability of codeword alignment at undesired receivers,
which in turn, determines the average cardinality of the corresponding aligned image set used to bound the DoF.
By taking such variations into consideration, which is a key difference compared to the proof in
\cite{Davoodi2016}, we arrive at a sum-DoF upper bound expressed similarly to the one in \eqref{eq:sum_DoF_UB},
except that $\alpha_{k} = \frac{1}{M} \sum_{m=1}^{M}\alpha_{k}^{[m]}$ is the average CSIT quality for user $k$ in the multi-subchannel setting.
Interestingly, this sum-DoF upper bound proves the optimality of the $2$-user, $M$-subchannel achievable DoF region derived in \cite{Hao2013}.
More generally, equipped with this sum-DoF upper bound, a polyhedral outer bound for the DoF
region, denoted by $\mathcal{D}_{\mathrm{out}}$, is constructed by bounding the sum-DoF of each subset of users (see Theorem \ref{th:outer_bound}).

Second, the outer bound is employed to show the sufficiency of the total order condition for separability.
In particular, in Section \ref{sec:suff_total_order} we show that for totally ordered CSIT patterns,
we have $\mathcal{D}_{\mathrm{out}} = \frac{1}{M} \big(\mathcal{D}^{[1]} \oplus \mathcal{D}^{[2]} \oplus \cdots \oplus \mathcal{D}^{[M]} \big)$,
where $\mathcal{D}^{[m]}$ is the single-subchannel DoF region for subchannel $m$ when treated as a separate network, and $\oplus$ denotes the Minkowski sum operation.
To this end, we first obtain an equivalent representation of the single-subchannel DoF region
$\mathcal{D}^{[m]}$ in terms of auxiliary variables,
which include DoF and power assignment variables used to tune the achievability scheme.
Interestingly, $\mathcal{D}_{\mathrm{out}}$ assumes a similar equivalent representation, which in turn
enables us to show that for any $\mathbf{d}\in\mathcal{D}_{\mathrm{out}}$, there exists
$\mathbf{d}^{[m]} \in \mathcal{D}^{[m]}$ for each $m$ such that $\mathbf{d} = \frac{1}{M}\sum_{m = 1}^{M} \mathbf{d}^{[m]}$.

As an auxiliary result, in obtaining the equivalent representation of the single-subchannel DoF region $\mathcal{D}^{[m]}$, we provide
an alternative proof for \cite[Th. 1]{Piovano2017}, where the achievability of the single-subchannel DoF region was first established.
This result, given in Lemma \ref{lem:single_subchannel}, is interesting in its own right as it shows that for each subchannel $m$, it is sufficient to optimize only a single power control variable, in addition to the common DoF assignment variables, to achieve all points
of the DoF region $\mathcal{D}^{[m]}$, as opposed to the $K$ power control variables required in  \cite{Piovano2017}.

Third, after establishing the sufficiency of the total order condition for separability, we prove its necessity in Section \ref{sec:Necessity_of_Total_Order}.
This is shown by explicitly characterizing a set of DoF tuples which are achievable through the joint coding scheme proposed for 2-user settings in \cite{Hao2013}, yet are not achievable through separate coding whenever the total order condition is violated.

Some insights are also drawn from the main separability result, as seen in Section \ref{subsec:results_total_order}.
For instance, a direct consequence is that totally ordered CSIT patterns yield maximal DoF regions
under per-user CSIT budget constraints.
Moreover, we show that any parallel MISO BC with a totally ordered partial CSIT pattern can be realized, in the DoF region sense,
by an equivalent parallel MISO BC with a totally ordered  PN-CSIT pattern.
Such observations  provide insights into the design of DoF-optimal CSIT feedback schemes for multi-carrier wireless systems.
\subsection{Notation}
\label{subses:notation}
$a,A$ are scalars, with $A$ often denoting a random variable unless the contrary is obvious from the context.
$\mathbf{a} \triangleq (a_{1},\ldots,a_{k})$ is a $k$-tuple of scalars, which is also considered to be a column vector.
For any subset of indices $\mathcal{S} \subseteq \{1,\ldots,k\}$, we use $\mathbf{a}(\mathcal{S})$ to denote $\sum_{i \in \mathcal{S}}a_{i}$.
A column vector of all ones is denoted by $\mathbf{1}$, with dimension made clear from the context.
$\mathbf{A}$ is a matrix, with dimensions made clear from the context, and
$\mathcal{A}$ is a set.
We use $\mathbb{R}$, $\mathbb{Q}$ and $\mathbb{Z}$ to denote the sets of real, rational and integer numbers, respectively.
For any subset $\mathcal{A} \subseteq \mathbb{R}$ and positive integers $k$ and $m$, $\mathcal{A}^{k \times m}$ is the set of 
all $k \times m$ matrices with entries drawn from $\mathcal{A}$.
For any positive integers $k_{1}$ and $k_{2}$, with $k_{1} \leq k_{2}$,
the sets $\{1,\ldots,k_{1}\}$  and $\{k_{1},\ldots,k_{2}\}$ are denoted by $\langle k_{1} \rangle$ and $\langle k_{1} : k_{2} \rangle$, respectively.
For sets $\mathcal{A}$ and $\mathcal{B}$, $\mathcal{A}\setminus \mathcal{B}$ is the set of elements in
$\mathcal{A}$ and not in $\mathcal{B}$.
For any pair of sets $\mathcal{A},\mathcal{B} \subseteq \mathbb{R}^{k}$, their Minkowski sum $\mathcal{A}  \oplus \mathcal{B}$ is also a set in $\mathbb{R}^{k}$ defined as $\mathcal{A}  \oplus \mathcal{B} \triangleq  \big\{\mathbf{a} + \mathbf{b} : \mathbf{a} \in \mathcal{A},
\ \mathbf{b} \in \mathcal{B} \big\}$.
\section{System Model and Preliminaries}
\label{sec:system_model}
We consider a parallel MISO BC comprising a $K$-antenna transmitter and $K$ single-antenna receivers (users),
in which communication occurs over $M$ parallel subchannels.\footnote{More generally, denoting the number of transmit antennas by $K_{\mathrm{t}}$, results in this paper extend easily to the case  $K_{\mathrm{t}} > K$. 
On the other hand, the case  $K_{\mathrm{t}} < K$ (i.e. overloaded case) incurs additional challenges, requiring insights beyond the results in this paper. Progress on the overloaded case, when $M =1$, has been reported in \cite{Piovano2016}.}
The index sets for receivers (and similarly, transmit antennas) and subchannels are given by  $\mathcal{K} \triangleq \langle K \rangle$ and $\mathcal{M} \triangleq \langle M \rangle$,
respectively.
For transmissions taking place over $n>0$ uses of the parallel channel (e.g. time instances), the input-output relationship
for channel use $t$, where  $ t \in \langle n \rangle$, is given by:
\begin{equation}
\label{eq:received signal}
Y_{k}^{[m]}(t)= \sum_{i \in \mathcal{K}} G_{ki}^{[m]}(t) X_{i}^{[m]}(t)+ Z_k^{[m]}(t), \ m \in \mathcal{M}, k \in \mathcal{K}
\end{equation}
In the above, for channel use $t$ and subchannel $m$, $Y_{k}^{[m]}(t)$ is the signal observed by user $k$, $G_{ki}^{[m]}(t)$ is the fading channel coefficient between transmit antenna $i$ and user $k$, $X_{i}^{[m]}(t)$ is the symbol sent from transmit antenna $i$,
and $Z_k^{[m]}(t) \sim \mathcal{N}_{\mathbb{C}}(0,1)$ is the zero mean unit variance additive white Gaussian noise (AWGN) at user $k$.
All signals and channel coefficients in the above are complex.
The transmitter is subject to the power constraint given by:
\begin{equation}
\label{eq:power_constraint}
\frac{1}{nM} \sum_{t=1}^{n} \sum_{m=1}^{M} \E \left[ | X_{1}^{[m]}(t) |^{2} + | X_{2}^{[m]}(t) |^{2} + \cdots + | X_{K}^{[m]}(t) |^{2} \right]
\leq P
\end{equation}
which can be interpreted as the average transmission power per-channel-use per-subchannel.
\subsection{Partial CSIT}
\label{subsec:partial_CSIT}
Under the partial CSIT model of interest, the channel coefficients associated with user $k$ over subchannel $m$ are modeled as
\begin{equation}
\label{eq:channel model}
G_{ki}^{[m]}(t) = \hat{G}_{ki}^{[m]}(t) + \sqrt{P^{-\alpha_{k}^{[m]}}}\tilde{G}_{ki}^{[m]}(t), \ i \in \mathcal{K}, t  \in \langle n \rangle
\end{equation}
where $\hat{G}_{ki}^{[m]}(t)$ and $\tilde{G}_{k}^{[m]}(t)$ are the corresponding channel estimate and estimation error terms, respectively, while $\alpha_{k}^{[m]} \in [0,1]$ is a CSIT quality level parameter.
We assume non-degenerate channel conditions, where values of all channel variables, alongside the determinants of the
overall channel matrices, are bounded away from zero and infinity \cite{Davoodi2016}.

We also consider a non-degenerate channel uncertainty model, where channel variables $\hat{G}_{ki}^{[m]}(t)$ and $\tilde{G}_{ki}^{[m]}(t)$ are subject to the bounded density assumption\footnote{\label{foot:bounded_density}A set of random variables $\mathcal{G}$ satisfies the bounded density assumption if there existence of a finite positive constant $0<f_{\max}< \infty$ such that for all finite cardinality disjoint subsets $\mathcal{G}_{1},\mathcal{G}_{2} \subset \mathcal{G}$, the join probability density function of variables in $\mathcal{G}_{1}$ conditioned on variables in $\mathcal{G}_{2}$ exists and is bounded above by $f_{\max}^{|\mathcal{G}_{1}|}$.} (see \cite[Sec. II.D]{Davoodi2016} and
\cite[Definition 1]{Davoodi2018}).
The main difference between $\hat{G}_{ki}^{[m]}(t)$ and $\tilde{G}_{ki}^{[m]}(t)$ is that the actual realizations of the former
are revealed to the transmitter, while the realizations of the latter remain unknown to the transmitter.
Under this CSIT uncertainty model, the parameter $\alpha_{k}^{[m]} \in [0,1]$  captures the whole range of knowledge available at the transmitter of user $k$'s channel coefficients over subchannel $m$, i.e.
$\alpha_{k}^{[m]} = 0$ essentially reduces to the case where channel knowledge is absent, while $\alpha_{k}^{[m]} = 1$
amounts to perfectly known CSIT, both in the DoF sense.
\subsection{CSIT Pattern and Total Order}
\label{subsec:partial_CSIT_model}
We define the \emph{CSIT pattern} $\mathbf{A} \in [0,1]^{K \times M}$ as the matrix of CSIT parameters given by
\begin{equation}
 \mathbf{A} \triangleq \left[
                         \begin{array}{cccc}
                           \alpha_{1}^{[1]} &  \alpha_{1}^{[2]} & \cdots & \alpha_{1}^{[M]} \\
                           \alpha_{2}^{[1]} &  \alpha_{2}^{[2]} & \cdots & \alpha_{2}^{[M]} \\
                           \vdots & \vdots  & \ddots & \vdots \\
                           \alpha_{K}^{[1]} &  \alpha_{K}^{[2]} & \cdots & \alpha_{K}^{[M]} \\
                         \end{array}
                       \right].
\end{equation}
In scenarios where CSIT is either perfect or not available for all users, we have $ \mathbf{A}  \in \{0,1\}^{K \times M} $,
which we refer to as a PN-CSIT pattern.
The CSIT state for subchannel $m$ is given by the tuple $\bm{\alpha}^{[m]} \triangleq ( \alpha_{1}^{[m]} ,\ldots,\alpha_{K}^{[m]})$,
formed by the CSIT parameters for all $K$ users over subchannel $m$.
On the other hand, the CSIT tuple associated with user $k$ over all $M$ subchannels is given by
$\bm{\alpha}_{k} \triangleq  (\alpha_{k}^{[1]} ,\ldots,\alpha_{k}^{[M]} )$.
Recalling that we take tuples to represent column vectors, the CSIT pattern $\mathbf{A}$ is compactly written as
\begin{equation}
 \mathbf{A}  = \left[\begin{array}{cccc}
 \bm{\alpha}^{[1]} &  \bm{\alpha}^{[2]} & \cdots & \bm{\alpha}^{[M]}
 \end{array}
 \right]
  = \left[\begin{array}{cccc}
 \bm{\alpha}_{1} &  \bm{\alpha}_{2} & \cdots & \bm{\alpha}_{K}
 \end{array}
 \right]^{\Trn}
\end{equation}
The \emph{average CSIT quality} for user $k$ is defined as
\begin{equation}
\alpha_{k} \triangleq \frac{1}{M} \sum_{m=1}^{M} \alpha_{k}^{[m]}
\end{equation}
from which we construct the average CSIT state as
\begin{equation}
\bm{\alpha} \triangleq (\alpha_{1} ,\ldots,\alpha_{K} ) = \frac{1}{M}\mathbf{A} \cdot \mathbf{1}.
\end{equation}
Without loss of generality, we may assume the following order of average CSIT qualities:
 \begin{equation}
\label{eq:average CSIT order}
\alpha_{1} \geq \alpha_{2} \geq \cdots \geq \alpha_{K}.
\end{equation}
Next, we introduce the  notion of total order.
\begin{definition}
\label{def:total order}
Users are totally ordered with respect to CSIT parameters if there exists a permutation $\pi$ over $\langle K \rangle$  such that $\bm{\alpha}_{\pi(1)} \geq \bm{\alpha}_{\pi(2)} \geq \cdots \geq \bm{\alpha}_{\pi(K)}$, where the vector inequalities are element-wise.
Under the average CSIT order in \eqref{eq:average CSIT order}, the condition for total order becomes
\begin{equation}
\label{eq:total_CSIT_order}
\bm{\alpha}_{1} \geq \bm{\alpha}_{2} \geq \cdots \geq \bm{\alpha}_{K}.
\end{equation}
\end{definition}
According to the above definition, and assuming that \eqref{eq:average CSIT order} always holds without loss of generality,
the entries of each column $\bm{\alpha}^{[m]}$
of a totally ordered CSIT pattern $\mathbf{A}$
are non-increasing with respect to the user index $k$.
Moreover, for fixed $K$ and $M$ which specify a corresponding class of parallel MISO BCs, we use  $\mathcal{A}_{\mathrm{to}}$, where $\mathcal{A}_{\mathrm{to}} \subseteq [0,1]^{K \times M}$,  to denote the corresponding set of all CSIT patterns $\mathbf{A}$ which are totally ordered according to \eqref{eq:total_CSIT_order}.
\subsection{Messages, Rates, Capacity and DoF}
\label{subsec:msgs-rates-capacity-DoF}
The transmitter has messages $W_1,\ldots, W_K$ intended to users $1,\ldots,K$, respectively.
Achievable rate tuples $(R_1(P),\ldots,R_K(P))$ and the capacity region $\mathcal{C}(P)$ are all defined in the standard Shannon theoretic sense.
Note that achievable rates are defined as $n \rightarrow \infty$, yet $M$ remains fixed for a given channel.
The DoF tuple $\mathbf{d} \triangleq (d_{1}, \ldots, d_{K})$ is said to be achievable if there exists $(R_1(P),\ldots,R_K(P)) \in \mathcal{C}(P)$ such that $d_k=\lim_{P \to \infty} \frac{R_k(P)}{M\log(P)}$ for all $k \in \langle K \rangle$.
Here $M\log(P)$ approximates the baseline capacity of the $M$ subchannels at high SNR.
The DoF region is denoted by $\mathcal{D}$, and is defined as the closure of all achievable DoF tuples $\mathbf{d}$.
As the setting of interest is parameterized by the CSIT pattern $\mathbf{A}$,
we occasionally make this dependency explicit  in the DoF region, i.e. $\mathcal{D}(\mathbf{A})$, especially when
comparing channels with different CSIT patterns.
\begin{remark}
\label{remark:DoF per subchannel}
According to the above definition, the considered DoF is per-channel-use per-subchannel.
For example, if channel uses and subchannels represent time instances and orthogonal frequency sub-carriers respectively,
the DoF represents the number of interference free spatial signalling dimensions per orthogonal time-frequency signalling dimension at high SNR.
\end{remark}
\subsection{Separate Coding and Separability}
To set the stage for our main result, presented in the following section, we here present an inner bound for $\mathcal{D}$ achieved through
separate coding over each subchannel.
First, let us consider subchannel $m$, where $m \in \mathcal{M}$, as a stand-alone network and let us denote its optimal
DoF region by $\mathcal{D}^{[m]}$, which consists of all achievable DoF tuples $\mathbf{d}^{[m]} = \big(d_{1}^{[m]},\ldots,d_{K}^{[m]}\big)$
over subchannel $m$.
From \cite[Th. 1]{Piovano2017}, we know that the DoF region $\mathcal{D}^{[m]}$
is given by
\begin{equation}
\label{eq:DoF_region_m}
\mathcal{D}^{[m]} \triangleq \Big\{ \mathbf{d}^{[m]} \in \mathbb{R}_{+}^{K} : \mathbf{d}^{[m]}(\mathcal{S}) \leq 1  +
\bm{\alpha}^{[m]}(\mathcal{S}) - \max_{j \in \mathcal{S}} \alpha_{j}^{[m]}, \ \mathcal{S} \subseteq \mathcal{K} \Big\}.
\end{equation}
Going back to the MISO BC with $M$ parallel subchannels,
separate coding can be carried out over each subchannel to achieve
any DoF tuple $\mathbf{d} = (d_{1},\ldots,d_{K}) \in \mathcal{D}$ of the form
\begin{equation}
\mathbf{d} = \frac{1}{M}\sum_{m \in \mathcal{M}}\mathbf{d}^{[m]},
\text{ for some } \mathbf{d} ^{[m]} \in \mathcal{D}^{[m]},  m \in \mathcal{M}.
\end{equation}
This separation based approach results in an achievable DoF region given by
\begin{equation}
\label{eq:DoF_region_in}
\mathcal{D}_{\mathrm{sep}} \triangleq \frac{1}{M} \bigoplus_{m\in \mathcal{M}} \mathcal{D}^{[m]}  = \frac{1}{M} \big(\mathcal{D}^{[1]} \oplus \mathcal{D}^{[2]} \oplus \cdots \oplus \mathcal{D}^{[M]} \big)
\end{equation}
where we recall that $\oplus $ is the Minkowski sum operation (see Section \ref{subses:notation}).
As $\mathcal{D}_{\mathrm{sep}}$ is achievable, it readily follows that $\mathcal{D}_{\mathrm{sep}} \subseteq \mathcal{D}$.
Separability, in a DoF region sense, holds  when $\mathcal{D}_{\mathrm{sep}}$ and $\mathcal{D}$ coincide.
\begin{definition}
\label{def:separability}
The MISO BC with parallel subchannels is separable from a DoF region perspective if and only if
$\mathcal{D} = \mathcal{D}_{\mathrm{sep}} =
\frac{1}{M} \big(\mathcal{D}^{[1]} \oplus \mathcal{D}^{[2]} \oplus \cdots \oplus \mathcal{D}^{[M]} \big) $.
\end{definition}
Before concluding this section, we present a definition for the parameterized class of
polyhedra that encompass the region in \eqref{eq:DoF_region_m}.
As we see in consequent parts, this class of polyhedra plays a central role in our DoF region characterizations and separability result.
\begin{definition}
\label{def:pol_region}
Given a $K$-tuple of parameters $\bm{\beta} \in [0,1]^{K}$, the polyhedron  $\mathcal{P}(\bm{\beta})$
is defined as
\begin{equation}
\label{eq:pol_region}
\mathcal{P}(\bm{\beta}) \triangleq \Big\{ \mathbf{d} \in \mathbb{R}_{+}^{K} : \mathbf{d}(\mathcal{S}) \leq 1  +
\bm{\beta}(\mathcal{S}) - \max_{j \in \mathcal{S}} \beta_{j}, \ \mathcal{S} \subseteq \mathcal{K} \Big\}.
\end{equation}
\end{definition}
It is easy to verify that for each subchannel $m \in \mathcal{M}$, we have $\mathcal{D}^{[m]} = \mathcal{P}(\bm{\alpha}^{[m]})$.
\section{Main Results and Insights}
\label{sec:main_results}
In this section, we present the main results of this work alongside some observations and insights.
\begin{figure}[t]
\vspace{-1.0cm}
\centering
\includegraphics[width = 0.9\textwidth,trim={3.76cm 11.6cm 3.76cm 1.0cm},clip]{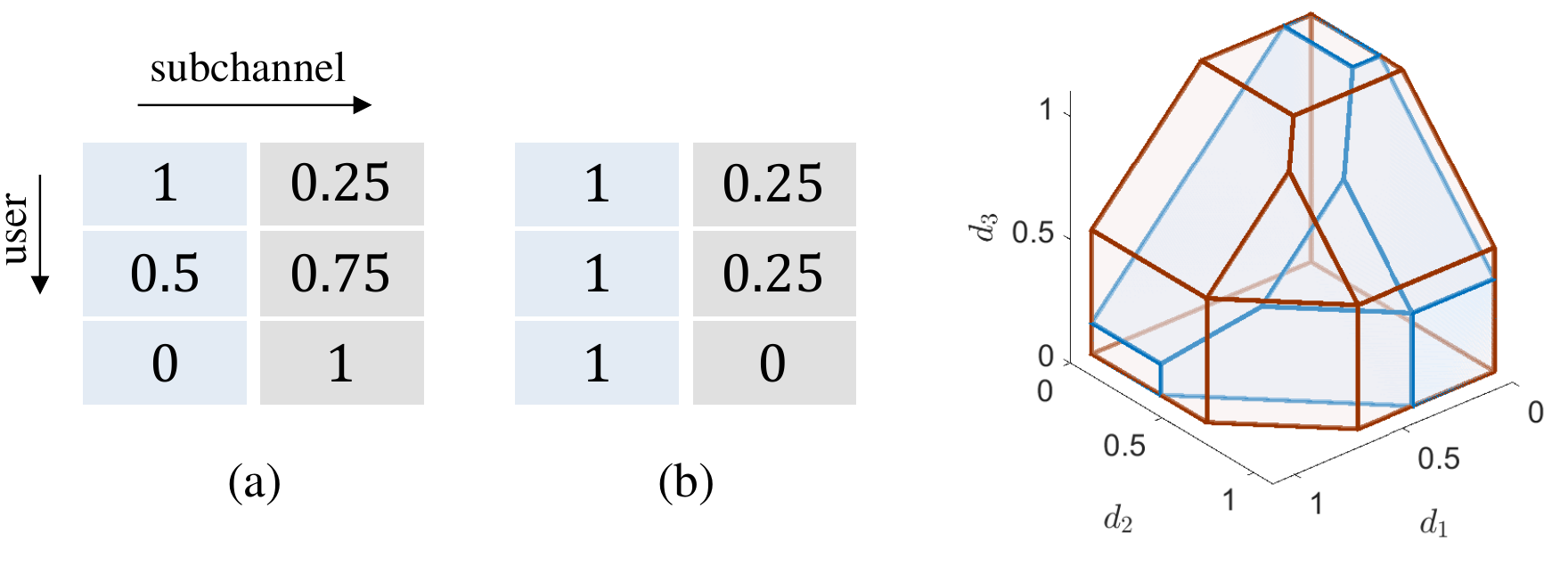}
\caption{\small
Two distinct CSIT patterns, $\mathbf{A}$ in (a) and $\mathbf{B}$  in (b), for the $3$-user, $2$-subchannel setting that have the same average CSIT state, i.e. $\bm{\alpha} = \frac{1}{2}\mathbf{A} \cdot \mathbf{1} = \frac{1}{2}\mathbf{B} \cdot \mathbf{1} = (0.625,0.625,0.5)$.
On the right-hand-side, we have the outer bound DoF region $\mathcal{D}_{\mathrm{out}} = \mathcal{P}(\bm{\alpha})$ in red, superimposed on top of the separate coding achievable DoF region $\mathcal{D}_{\mathrm{sep}}(\mathbf{A})$ for the CSIT pattern $\mathbf{A}$ in blue. The separate coding achievable DoF region for the CSIT pattern $\mathbf{B}$ coincides with the outer bound, i.e. $\mathcal{D}_{\mathrm{sep}}(\mathbf{B}) = \mathcal{P}(\bm{\alpha})$.}
\label{fig:regions}
\end{figure}
\subsection{Outer Bound}
\label{subsec:outer_bound}
We start by presenting an outer bound for the DoF region $\mathcal{D}$,
whose proof is given in Section \ref{sec:converse}.
\begin{theorem}
\label{th:outer_bound}
For the parallel MISO BC under partial CSIT described in Section \ref{sec:system_model},
the DoF region $\mathcal{D}$ is included in the polyhedral region $\mathcal{D}_{\mathrm{out}}$ given by
\begin{equation}
\label{eq:outer_bound_region}
\mathcal{D}_{\mathrm{out}} \triangleq \Big\{ \mathbf{d} \in \mathbb{R}_{+}^{K} : \mathbf{d}(\mathcal{S}) \leq 1 + \bm{\alpha}\big(\mathcal{S} \setminus \{\min \mathcal{S}\} \big), \ \mathcal{S} \subseteq \mathcal{K} \Big\}.
\end{equation}
\end{theorem}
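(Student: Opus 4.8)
The plan is to reduce the polyhedral bound \eqref{eq:outer_bound_region} to a single sum-DoF inequality applied on every sub-network, and then to establish that inequality by extending the aligned image sets (AIS) argument of \cite{Davoodi2016} to the parallel setting. Concretely, I would first observe that for any fixed $\mathcal{S} \subseteq \mathcal{K}$ the DoF tuple of the users in $\mathcal{S}$ is bounded by the DoF region of the reduced parallel MISO BC in which the users in $\mathcal{K} \setminus \mathcal{S}$ are eliminated, since discarding receivers (equivalently, handing their messages to a genie) can only enlarge the region for the remaining users; thus the projection of $\mathcal{D}$ onto the $\mathcal{S}$-coordinates is contained in the DoF region of the $|\mathcal{S}|$-user sub-network, whose per-user average qualities $\alpha_k$ are unchanged. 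It therefore suffices to prove the sum-DoF bound $\mathbf{d}(\mathcal{S}) \leq 1 + \bm{\alpha}\big(\mathcal{S} \setminus \{\min \mathcal{S}\}\big)$ for each $\mathcal{S}$ as a stand-alone statement. Under the global ordering \eqref{eq:average CSIT order} the user $\min \mathcal{S}$ is precisely the member of $\mathcal{S}$ with the highest average CSIT quality, so this is the natural analogue of \eqref{eq:sum_DoF_UB} for the sub-network, and I would prove it by relabelling the chosen users in decreasing average CSIT and targeting $\sum_k d_k \leq 1 + \alpha_2 + \cdots$ for that set.

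Next I would set up the converse machinery. Using Fano's inequality together with a genie-aided chain in which the receiver of user $k$ is supplied with the messages $W_1,\ldots,W_{k-1}$, the sum rate reduces to a sum of differences of conditional entropies of the received signal vectors $Y_k^n$, where $Y_k^n$ aggregates user $k$'s observations over all $M$ subchannels and $n$ channel uses. The baseline term contributes the ``$1$'' via $H(Y_1^n \mid \cdot) \leq nM\log P + o(n\log P)$, reflecting $M$ single-antenna observations per channel use. The heart of the argument is then a lemma bounding each cross-user entropy difference by $H(Y_k^n \mid \cdot) \leq H(Y_{k-1}^n \mid \cdot) + nM\alpha_k \log P + o(n\log P)$, where $k-1$ is the stronger (higher-$\alpha$) receiver. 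This is where AIS counting enters: passing to the bounded-density deterministic model of \cite{Davoodi2016,Davoodi2018}, one estimates the expected cardinality of the set of transmit codewords whose images align at the weaker receiver $k$ while remaining resolvable at the stronger receiver $k-1$, the expectation being taken over the unknown error terms $\tilde{G}_{ki}^{[m]}$.

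The essential new ingredient, and the place where the parallel structure departs from \cite{Davoodi2016}, is that a single codeword now spans all $M$ subchannels, so the CSIT quality governing alignment varies within one codeword. Because the channel error terms are independent across subchannels and satisfy the bounded density assumption, the probability that a pair of codewords aligns at user $k$ factorizes into a product of per-subchannel alignment probabilities, each controlled by $P^{-\alpha_{k}^{[m]}}$; the joint exponent is therefore $\sum_{m=1}^{M}\alpha_{k}^{[m]} = M\alpha_k$. Carrying this through the image-set cardinality estimate produces the factor $M\alpha_k$ above, and after normalizing rates by the baseline $M\log P$ the per-subchannel averaging $\alpha_k = \frac{1}{M}\sum_{m} \alpha_{k}^{[m]}$ emerges exactly as in \eqref{eq:outer_bound_region}. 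Summing the telescoped entropy differences along the chain then yields $1 + \alpha_2 + \cdots$ for the chosen set, and specializing the set to each $\mathcal{S}$ assembles the full polyhedral bound $\mathcal{D} \subseteq \mathcal{D}_{\mathrm{out}}$.

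I expect the main obstacle to be the combinatorial bookkeeping in this multi-subchannel AIS lemma. One must track \emph{jointly} across subchannels which transmit symbols collide at the undesired receiver, and verify that the bounded-density argument governs the joint alignment event rather than each subchannel in isolation; in particular, that conditioning across subchannels does not destroy the product form of the alignment probability. Establishing that the expected aligned-image-set cardinality grows no faster than $P^{M\alpha_k}$, uniformly in the realized channel estimates $\hat{G}_{ki}^{[m]}$, is the crux of the proof, whereas the reduction to subsets and the telescoping genie chain are comparatively routine adaptations of the single-subchannel analysis.
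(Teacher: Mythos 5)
Your proposal follows the same architecture as the paper's proof: reduce \eqref{eq:outer_bound_region} to a single sum-DoF inequality per subset $\mathcal{S}$ by eliminating the remaining users, run a Fano/genie telescoping chain, and bound each cross-receiver entropy difference via a multi-subchannel aligned-image-sets lemma whose alignment probability factorizes over channel uses and subchannels, producing the exponent $\sum_{m=1}^{M}\alpha_{k}^{[m]} = M\alpha_{k}$; this is precisely the paper's Lemma \ref{lemma:diff_entropies}. Your worry about whether conditioning across subchannels destroys the product form is resolved by the fact that the bounded density assumption is stated conditionally (the density of any finite set of channel variables conditioned on any disjoint set is bounded by $f_{\max}^{|\cdot|}$), so no independence across subchannels is required.

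There is, however, a genuine error: your genie points the wrong way, and the key lemma you state is false as written. With users labelled in decreasing average quality, you hand receiver $k$ the messages $W_{1},\ldots,W_{k-1}$ of the \emph{stronger} users. Telescoping then produces differences in which the stronger receiver's entropy is subtracted, namely $H(Y_{k}\mid W_{\langle 1:k-1\rangle},\mathcal{G}) - H(Y_{k-1}\mid W_{\langle 1:k-1\rangle},\mathcal{G})$. But the AIS machinery bounds a difference $H(Y_{a}\mid\cdot) - H(Y_{b}\mid\cdot)$ by the CSIT quality of the receiver at which the images align, which is necessarily receiver $b$, the one whose entropy is subtracted: it is receiver $b$'s channel coefficients that enter the alignment event, so its density peak $f_{\max}\bar{P}^{\alpha_{b}^{[m]}}$ controls the count. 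Hence your differences are bounded by $\alpha_{k-1}$, not $\alpha_{k}$, and your chain only delivers $\mathbf{d}(\mathcal{K}) \leq 1 + \alpha_{1} + \cdots + \alpha_{K-1}$, which excludes the weakest rather than the strongest user's quality and is strictly weaker than \eqref{eq:sum_DoF_bound}; for $K=2$ it gives $d_{1}+d_{2} \leq 1+\alpha_{1}$ instead of $1+\alpha_{2}$. The lemma you state, $H(Y_{k}^{n}\mid\cdot) \leq H(Y_{k-1}^{n}\mid\cdot) + nM\alpha_{k}\log P + o(n\log P)$, would rescue this pairing, but it contradicts your own AIS description (alignment at the weaker receiver $k$, resolvability at the stronger receiver $k-1$, bounds the \emph{reverse} difference), and it is in fact false: take $K=2$, $M=1$, $\alpha_{1}=1$, $\alpha_{2}=0$, and let the transmitter send a full-power stream zero-forced at receiver $1$; then $H(Y_{1}\mid W_{1},\mathcal{G})$ is $o(n\log P)$ while $H(Y_{2}\mid W_{1},\mathcal{G})$ grows linearly in $n\log P$, so it cannot be bounded by $nM\alpha_{2}\log P + o(n\log P)$. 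The repair is to flip the genie, as the paper does in \eqref{eq:fano}: condition user $k$'s Fano bound on the \emph{weaker} users' messages $W_{\langle k+1:K\rangle}$, so that the telescoping differences become $H_{k}^{\Delta} = H(\bar{Y}_{k-1}\mid W_{\langle k:K\rangle},\mathcal{G}) - H(\bar{Y}_{k}\mid W_{\langle k:K\rangle},\mathcal{G})$; your AIS step, exactly as you describe it, then bounds each $H_{k}^{\Delta}$ by $\alpha_{k}$ in the DoF sense (Lemma \ref{lemma:diff_entropies}), and summing yields $1+\alpha_{2}+\cdots+\alpha_{K}$ as required. With that single reversal, your proposal coincides with the paper's proof.
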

It is worthwhile highlighting that due to the order of average CSIT qualities in \eqref{eq:average CSIT order},
for any $\mathcal{S} \subseteq \mathcal{K}$, the corresponding sum-DoF inequality in
\eqref{eq:outer_bound_region} is equivalently expressed as
\begin{equation}
\mathbf{d}(\mathcal{S}) \leq 1 + \bm{\alpha}(\mathcal{S} ) - \max_{i \in \mathcal{S}}\{ \alpha_{i}\}.
\end{equation}
It readily follows that the outer bound in \eqref{eq:outer_bound_region} belongs to the class of polyhedra in Definition \ref{def:pol_region},
i.e. $\mathcal{D}_{\mathrm{out}} = \mathcal{P}(\bm{\alpha})$.
It is also evident that this outer bound depends only on the average CSIT state $\bm{\alpha}$,
and hence does not distinguish between different CSIT patterns that have the same average CSIT qualities, e.g.
distinct $\mathbf{A}$  and $\mathbf{B}$ in $[0,1]^{K \times M}$ with $\mathbf{A} \cdot \mathbf{1} = \mathbf{B} \cdot \mathbf{1}$ (see Fig. \ref{fig:regions}).
An implication of overlooking the details of CSIT patterns is that $\mathcal{D}_{\mathrm{out}}$ is not tight in general.
This is further elaborated through the following observations, in which we examine $\mathcal{D}_{\mathrm{out}}$ in
light of prior results.
\begin{enumerate}
\item \textbf{Single Subchannel:} It can be easily verified that in the single-subchannel case, i.e. $M = 1$,
the outer bound in \eqref{eq:outer_bound_region} reduces to the DoF region characterized in \cite{Piovano2017} (see \eqref{eq:DoF_region_m}).
\item \textbf{Two Users:} For the 2-user case with an arbitrary number of subchannels $M$, the outer bound in Theorem
\ref{th:outer_bound} boils down to
\begin{equation}
\label{eq:DoF_region_2_users}
\big\{ (d_{1},d_{2}) \in \mathbb{R}_{+}^{2} : d_{1} \leq 1, \ d_{2} \leq 1, \ d_{1}+d_{2} \leq 1 + \alpha_{2} \big\}.
\end{equation}
The region in \eqref{eq:DoF_region_2_users} was shown to be achievable in \cite{Hao2013} using a scheme that
performs joint coding across the $M$ subchannels, in general.
The optimality of this achievable region, however, remained open.\footnote{An attempt to prove the optimality of \eqref{eq:DoF_region_2_users}
in \cite{Hao2013} was shown to be flawed in \cite{Davoodi2016}.}
The result in Theorem \ref{th:outer_bound} settles this issue.
\item \textbf{Beyond Two Users:} Beyond the 2-user or single-subchannel cases,
the outer bound in Theorem \ref{th:outer_bound} is not tight in general.
This can be inferred from the 3-user, 3-subchannel example in \cite[Fig. 3]{Rassouli2016}, with a symmetric PN-CSIT pattern formed by the states $(1,0,0)$, $(0,1,0)$ and $(0,0,1)$, through which it is shown that average CSIT qualities\footnote{Equivalent to  the marginal prababilities, or \emph{marginals}, in the the alternating CSIT context considered in \cite{Rassouli2016}.} on their own
are generally insufficient to describe tight DoF outer bounds for the MISO BC with parallel subchannels whenever $K \geq 3$.
Alternatively, tighter outer bounds are derived by taking into account the specific, and possibly alternating (i.e. non-totally ordered), structure of CSIT patterns, which  is not captured by the corresponding average CSIT states \cite[Sec. V]{Rassouli2016}.
\end{enumerate}
\subsection{Total Order and Separability}
\label{subsec:results_total_order}
Although not tight in general, the outer bound presented in Theorem \ref{th:outer_bound} is in fact entirely achievable
for a broad regime of CSIT patterns identified in the following result (see, e.g., Fig. \ref{fig:regions}).
\begin{theorem}
\label{th:total_order}
The parallel MISO BC under partial CSIT described in Section \ref{sec:system_model}
is separable from a DoF region perspective if and only if the corresponding CSIT pattern is totally ordered, i.e. $\mathbf{A} \in \mathcal{A}_{\mathrm{to}}$.
Moreover, under this total order condition, the DoF region is given by
\begin{equation}
\label{eq:th_DoF_region_total_order}
   \mathcal{D} = \mathcal{D}_{\mathrm{sep}} =  \mathcal{D}_{\mathrm{out}}.
\end{equation}
\end{theorem}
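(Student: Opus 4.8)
The plan is to sandwich the DoF region between the two bounds already in hand and reduce the theorem to two set inclusions. By construction (see \eqref{eq:DoF_region_in} and Theorem \ref{th:outer_bound}) one always has $\mathcal{D}_{\mathrm{sep}} \subseteq \mathcal{D} \subseteq \mathcal{D}_{\mathrm{out}} = \mathcal{P}(\bm{\alpha})$, irrespective of the CSIT pattern. Consequently, proving \eqref{eq:th_DoF_region_total_order} under total order amounts to the single reverse inclusion $\mathcal{P}(\bm{\alpha}) \subseteq \frac{1}{M}\bigoplus_{m}\mathcal{P}(\bm{\alpha}^{[m]})$, since this collapses the sandwich to equalities and in particular yields separability in the sense of Definition \ref{def:separability}. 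Conversely, to prove the \emph{only if} direction I would exhibit, whenever total order fails, an explicit tuple lying in $\mathcal{D}$ but outside $\mathcal{D}_{\mathrm{sep}}$.

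For the \emph{if} direction I would first record the elementary but pivotal consequence of \eqref{eq:total_CSIT_order}: since the user ordering is the same in every subchannel, $\max_{j\in\mathcal{S}}\alpha_j^{[m]} = \alpha_{\min\mathcal{S}}^{[m]}$ for all $\mathcal{S}$ and all $m$, whence summing the right-hand sides of the defining inequalities of the $\mathcal{P}(\bm{\alpha}^{[m]})$ reproduces exactly $M$ times the right-hand side of \eqref{eq:outer_bound_region}. This identity makes the facet descriptions of the two sides compatible and already delivers the easy inclusion $\frac{1}{M}\bigoplus_m\mathcal{P}(\bm{\alpha}^{[m]})\subseteq\mathcal{P}(\bm{\alpha})$; the substance is the reverse containment, i.e. that every $\mathbf{d}\in\mathcal{P}(\bm{\alpha})$ decomposes as $\mathbf{d}=\frac{1}{M}\sum_m\mathbf{d}^{[m]}$ with $\mathbf{d}^{[m]}\in\mathcal{P}(\bm{\alpha}^{[m]})$. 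I would carry this out constructively using the equivalent auxiliary-variable representation of the single-subchannel region furnished by Lemma \ref{lem:single_subchannel}, which parametrises $\mathcal{P}(\bm{\alpha}^{[m]})$ by a single power-control variable together with the common-DoF assignment variables. After checking that $\mathcal{P}(\bm{\alpha})$ admits the analogous representation with the averaged parameters, I would split the power-control and common-DoF budgets realising $\mathbf{d}$ subchannel-by-subchannel, using total order to ensure that each per-subchannel assignment is feasible for $\mathcal{P}(\bm{\alpha}^{[m]})$ and that the resulting tuples average back to $\mathbf{d}$.

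For the \emph{only if} direction, suppose $\mathbf{A}\notin\mathcal{A}_{\mathrm{to}}$. Under the average ordering \eqref{eq:average CSIT order}, failure of \eqref{eq:total_CSIT_order} forces a \emph{crossing}: there are users $k<k'$ and subchannels $l,m$ with $\alpha_{k}^{[l]}>\alpha_{k'}^{[l]}$ and $\alpha_{k}^{[m]}<\alpha_{k'}^{[m]}$. I would then restrict to this pair, setting $d_i=0$ for $i\neq k,k'$; since a vanishing average DoF forces a vanishing per-subchannel DoF, any point of $\mathcal{D}_{\mathrm{sep}}$ with these coordinates zeroed obeys the per-subchannel two-user bound $d_k^{[m']}+d_{k'}^{[m']}\le 1+\min(\alpha_k^{[m']},\alpha_{k'}^{[m']})$, so that separate coding caps the pair at $d_k+d_{k'}\le 1+\frac{1}{M}\sum_{m'}\min(\alpha_k^{[m']},\alpha_{k'}^{[m']})$. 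Running instead the two-user joint scheme of \cite{Hao2013} across all $M$ subchannels achieves $d_k+d_{k'}=1+\min(\alpha_k,\alpha_{k'})$. The elementary inequality $\min(\sum_{m'}a_{m'},\sum_{m'}b_{m'})\ge\sum_{m'}\min(a_{m'},b_{m'})$, which is \emph{strict} precisely when the two sequences cross, then gives $1+\min(\alpha_k,\alpha_{k'})>1+\frac{1}{M}\sum_{m'}\min(\alpha_k^{[m']},\alpha_{k'}^{[m']})$, so the joint-coding sum-DoF point for the pair lies in $\mathcal{D}\setminus\mathcal{D}_{\mathrm{sep}}$ and separability fails.

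I expect the decomposition step of the \emph{if} direction to be the main obstacle. The tempting shortcut --- declaring $\mathcal{P}(\bm{\beta})$ a polymatroid and invoking $\bigoplus_m P(f_m)=P(\sum_m f_m)$ --- is unavailable: the set function $f_{\bm{\beta}}(\mathcal{S})=1+\bm{\beta}(\mathcal{S})-\max_{j\in\mathcal{S}}\beta_j$ is not submodular (already for $K=3$ the pair $\mathcal{S}=\{1,3\}$, $\mathcal{T}=\{2,3\}$ violates submodularity whenever $\beta_2>\beta_3$), so the Minkowski decomposition cannot be read off a rank function and must be produced by hand through the achievability parametrisation. Total order enters exactly here, keeping the per-subchannel ``strongest-user'' orderings aligned so that the common-DoF assignments match across subchannels; without it the crossing constructed above is precisely what obstructs any such matching.
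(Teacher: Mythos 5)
Your proposal is correct and follows essentially the same route as the paper: the sufficiency direction via the sandwich $\mathcal{D}_{\mathrm{sep}} \subseteq \mathcal{D} \subseteq \mathcal{D}_{\mathrm{out}}$ and a constructive Minkowski decomposition of $\mathcal{P}(\bm{\alpha})$ built on the single-power-variable representation of Lemma \ref{lem:single_subchannel}, and the necessity direction via a CSIT ``crossing'' pair, the joint-coding scheme of \cite{Hao2013}, and the strictness of $\frac{1}{M}\sum_{m}\min\{\alpha_{k}^{[m]},\alpha_{j}^{[m]}\} < \min\{\alpha_{k},\alpha_{j}\}$ under crossing. Your closing observation that non-polymatroidality of $\mathcal{P}(\bm{\beta})$ blocks the shortcut through direct summability of rank functions is exactly the point the paper makes in its remark and Appendix \ref{appendix:non_polymatroidality}.
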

The sufficiency of the total order condition for separability is proved in Section \ref{sec:suff_total_order} by
showing that $\mathcal{D}_{\mathrm{sep}} = \mathcal{D}_{\mathrm{out}}$ whenever $\mathbf{A} \in \mathcal{A}_{\mathrm{to}}$.
On the other hand, the necessity of the total order condition for separability is proved in Section \ref{sec:Necessity_of_Total_Order},
where we show that $\mathcal{D}_{\mathrm{sep}} \subset \mathcal{D}$ whenever $\mathbf{A} \notin \mathcal{A}_{\mathrm{to}}$.
In what follows, we draw some insights from the results in Theorem \ref{th:outer_bound} and Theorem \ref{th:total_order}.
\begin{enumerate}
\item \textbf{Maximal DoF region under CSIT budget constraints:}
The above results offer insights into the optimal allocation
of CSIT resources across subchannels under \emph{per-user} budget constraints.
In particular, consider a constraint on CSIT budgets given by
\begin{equation}
\frac{1}{M} \mathbf{A} \cdot \mathbf{1} = \bm{\alpha} \leq \bm{\alpha}^{\star},
\end{equation}
where $\bm{\alpha}^{\star} = (\alpha_{1}^{\star},\ldots,\alpha_{K}^{\star})$ is
the tuple of maximum affordable average CSIT qualities.
Under such constraint, we know from Theorem \ref{th:outer_bound} that any admissible CSIT pattern $\mathbf{A}$ gives rise to a DoF region contained in $\mathcal{D}_{\mathrm{out}} =  \mathcal{P}(\bm{\alpha}^{\star})$.
Moreover, Theorem \ref{th:total_order} tells us that this maximal DoF region, given by $\mathcal{P}(\bm{\alpha}^{\star})$, is attainable with
separate coding over subchannels whenever $\frac{1}{M} \mathbf{A} \cdot \mathbf{1} = \bm{\alpha}^{\star}$
and $\mathbf{A} \in \mathcal{A}_{\mathrm{to}}$.
Therefore, in scenarios where CSIT patterns can be controlled through, for example, flexible allocation of uplink feedback resources,
abiding by the total order condition not only simplifies coding, but also yields maximal DoF regions.
\item \textbf{Only average CSIT qualities matter under total order:} 
The above results  imply that under total order, the DoF region
is described using average CSIT qualities only.
That is 
\begin{equation}
\label{eq:DoF_region_total_order_av_CSIT_only}
\mathcal{D}(\mathbf{A}) = \mathcal{P}(\bm{\alpha}), \ \text{for all} \ \mathbf{A} \in \mathcal{A}_{\mathrm{to}} \ \text{such that} \ \frac{1}{M} \mathbf{A} \cdot \mathbf{1} = \bm{\alpha}
\end{equation}
This observation may have operational significance, as some CSIT patters are more favourable than others in terms of implementation. 
This is explored further in the following point.
\item \textbf{PN-Decomposition:}
\begin{figure}[t]
\vspace{-1.0cm}
\centering
\includegraphics[width = 0.55\textwidth,trim={6.9cm 14cm 6.9cm 0cm},clip]{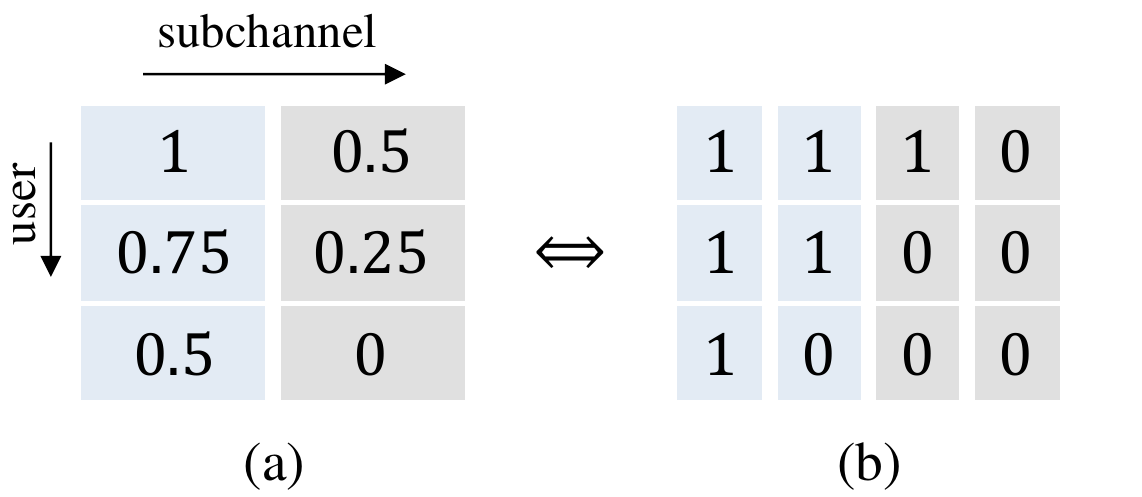}
\caption{\small
Two distinct CSIT patterns, $\mathbf{A}$ in (a) and $\mathbf{A'}$  in (b), with the same average CSIT state given by $\bm{\alpha} = (0.75,0.5,0.25)$. As both $\mathbf{A}$ and $\mathbf{A'}$ are totally ordered, we have $\mathcal{D}(\mathbf{A}) = \mathcal{D}(\mathbf{A}') = \mathcal{P}(\bm{\alpha})$.}
\label{fig:PN_decomp}
\end{figure}
A consequence of the above point is that for any parallel MISO BC with a totally ordered partial CSIT pattern $\mathbf{A}$ of \emph{rational} entries, i.e. 
$\mathbf{A} \in \mathcal{A}_{\mathrm{to}} \cap \mathbb{Q}^{K \times M}$, there is an equivalent parallel MISO BC with some totally ordered PN-CSIT pattern $\mathbf{A}'$, 
i.e. $\mathbf{A}'  \in \mathcal{A}_{\mathrm{to}} \cap \{0,1 \}^{K \times M}$. Equivalence here is in the DoF region sense, i.e. 
$\mathcal{D}(\mathbf{A}) = \mathcal{D}(\mathbf{A}')$.

To show the above, let us define $\mathbf{p}_{l}$ as the PN-CSIT state in which CSIT is perfect for the first $l$ users, where $l \in \langle 0 : K \rangle$, and not available for  the remaining $K-l$ users, that is
\begin{equation}
\label{eq:p_l}
\mathbf{p}_{l} \triangleq (\underbrace{1,\ldots,1}_{l \ \text{entries}},\underbrace{0,\ldots,0}_{K-l \ \text{entries}}), \ l \in \langle 0 : K \rangle.
\end{equation}
Consider an arbitrary CSIT pattern $\mathbf{A}$ with corresponding average CSIT state $\bm{\alpha}$.
We observe that $\bm{\alpha}$ can be decomposed as a weighted-sum
of the above PN-CSIT states as
\begin{equation}
\label{eq:w_CSIT_decomp}
\bm{\alpha} = \sum_{l = 0}^{K} w_{l} \mathbf{p}_{l},
\end{equation}
where the $l$-th weight is defined as $w_{l} \triangleq \alpha_{l} - \alpha_{l+1}$, such that
$\alpha_{0} = 1$ and $\alpha_{K+1} = 0$.
The above weights are  non-negative due to the order in \eqref{eq:average CSIT order}, and they clearly satisfy  $\sum_{l = 0}^{K} w_{l} = 1$.
Moreover, if we further assume that $\mathbf{A} \in \mathbb{Q}^{K \times M}$, then the average CSIT qualities $\alpha_{1},\ldots,\alpha_{K}$ are all rational, which in turn 
implies that the weights $w_{0},w_{1},\ldots,w_{K}$ are all rational as well. In this case, $\bm{\alpha}$ in \eqref{eq:w_CSIT_decomp} may be expressed as
\begin{equation}
\label{eq:av_CSIT_decomp}
\bm{\alpha} = \frac{1}{M'}\sum_{m' = 1}^{M'} \bm{\alpha}'^{[m']},
\end{equation}
for some integer $M'$ and PN-CSIT states $\bm{\alpha}'^{[m']} \in \{\mathbf{p}_{l} : l \in \langle 0:K \rangle \}$, for all  $m' \in \langle M'\rangle$,
of which some are possibly replicas of others.\footnote{For instance, assuming that $w_{l} = N_{l}/M_{l}$ for some integers $N_{l}$ and $M_{l}$, we may take $M' = M_{0} \times M_{1} \times \cdots \times M_{K}$ in \eqref{eq:av_CSIT_decomp}. In this case, each PN-CSIT state $\mathbf{p}_{l}$
is replicated $N_{l}M' / M_{l}$ times, where $N_{l}M' / M_{l}$ is an integer.} 
We take $\mathbf{A}'$ to be the PN-CSIT pattern composed of such states and $\mathcal{D}(\mathbf{A}')$ to be the corresponding DoF region.
It is evident that $\mathbf{A}' \in \mathcal{A}_{\mathrm{to}} \cap \{0,1 \}^{K \times M}$
and $\frac{1}{M} \mathbf{A} \cdot \mathbf{1} = \frac{1}{M} \mathbf{A}' \cdot \mathbf{1} = \bm{\alpha}$.

Now if we further assume that $\mathbf{A} \in \mathcal{A}_{\mathrm{to}}$, then it follows from Theorem \ref{th:total_order} and
 \eqref{eq:DoF_region_total_order_av_CSIT_only} that
\begin{equation}
\label{eq:PN_decomp_eq}
\mathcal{D}(\mathbf{A}) = \frac{1}{M} \bigoplus_{m=1}^{M} \mathcal{P}\left(\bm{\alpha}^{[m]}\right) = \mathcal{P}(\bm{\alpha}) = \frac{1}{M'} \bigoplus_{m=1}^{M'} \mathcal{P}\left(\bm{\alpha}'^{[m']}\right) = \mathcal{D}(\mathbf{A}').
\end{equation}
We further note that \eqref{eq:PN_decomp_eq} can be expressed as the following weighted Minkowski sum:
\begin{equation}
\label{eq:PN_decomp_in}
\mathcal{D}(\mathbf{A})   =
w_{0}\mathcal{P}\left( \mathbf{p}_{0} \right) \oplus
w_{1} \mathcal{P}\left( \mathbf{p}_{1} \right)  \oplus \cdots \oplus
w_{K} \mathcal{P}\left( \mathbf{p}_{K} \right).
\end{equation}
Recalling that $\mathcal{P}\left( \mathbf{p}_{l} \right)$ is the DoF region for a single subchannel with a PN-state $\mathbf{p}_{l}$,
it follows that the weight $w_{l}$ in \eqref{eq:PN_decomp_in} may be interpreted as the fraction of signalling
dimensions in which the effective CSIT state is $\mathbf{p}_{l}$, i.e. perfect CSIT for the first $l$ users and no CSIT for the remaining $K-l$ users.
Moreover, the average CSIT quality $\alpha_{k} = \sum_{l = k}^{K}w_{l}$
may be interpreted as the fraction of signallings dimensions in which perfect CSIT is available for user $k$.
This PN-decomposition interpretation is inline with, and extends, the weighted-sum interpretation in \cite{Hao2013,Hao2017}
and the notion of signal-space partitioning in \cite{Yuan2016,Davoodi2018}.

From the above, it follows that under total order and taking the viewpoint of user $k$,
reporting partial CSIT with average quality $\alpha_{k}$ is equivalent to
reporting perfect CSIT over a fraction $\alpha_{k}$ of subchannels, and no CSIT over the remaining subchannels.
An illustrative example is shown in Fig. \ref{fig:PN_decomp}.
This may have an operational significance for example in OFDMA
systems where CSIT feedback is carried out over a fraction of sub-carriers only.
\end{enumerate}
\begin{remark}
The above PN-decomposition is a special case of a more general decomposition, which we refer to as the Q-PN-decomposition, where Q-P stands for quasi-perfect.
In this more general decomposition, we obtain an equivalent Q-PN-CSIT pattern with entries drawn from  $\{0\} \cup [1-\epsilon_{\mathrm{Q}},1]$, where $\epsilon_{\mathrm{Q}} \geq 0$ is some Q-P tolerance parameter. 
That is, CSIT is either not available or quasi-perfect, where the latter corresponds to a quality parameter in $[1-\epsilon_{\mathrm{Q}},1]$.
This may be useful in scenarios where obtaining perfect CSIT is not possible, or where the original CSIT pattern we wish to decompose 
has irrational entries.
For instance, consider a 2-user, 3-subchannel setting with CSIT states
 $\bm{\alpha}^{[1]} = (0.2\pi,0.2\pi)$, $\bm{\alpha}^{[2]} = (0.2\pi,0.1 \pi)$ and $\bm{\alpha}^{[3]} = (0.2\pi,0)$.
This totally ordered CSIT pattern is equivalent to the Q-PN-CSIT pattern with states 
$\bm{\alpha}'^{[1]} = (0.3\pi,0.3\pi)$, $\bm{\alpha}'^{[2]} = (0.3\pi,0)$ and $\bm{\alpha}'^{[3]} = (0,0)$,
where CSIT is either not available or quasi-perfect with tolerance $\epsilon_{\mathrm{Q}} \approx 0.058$.
\end{remark}
\section{Proof of Outer Bound}
\label{sec:converse}
In this section, we present a proof for the outer bound in Theorem \ref{th:outer_bound}.
As single-user bounds in \eqref{eq:outer_bound_region} are trivial, we focus on the sum-DoF bound given by
\begin{equation}
\label{eq:sum_DoF_bound}
\mathbf{d}(\mathcal{K}) \leq 1 + \bm{\alpha}\big(\mathcal{K} \setminus \{\min \mathcal{K}\} \big).
\end{equation}
All remaining multi-user bounds in \eqref{eq:outer_bound_region}, corresponding to subsets $\mathcal{S} \subset \mathcal{K}$, are derived in a similar manner after eliminating users in $\mathcal{K}\setminus \mathcal{S}$ and their corresponding messages.
The proof of the bound in \eqref{eq:sum_DoF_bound} is based on the AIS approach and follows in the footsteps of the proof for the single-subchannel case in \cite{Davoodi2016}, with modifications to accommodate for the multi-subchannel setting, as alluded to in Section \ref{subsec:intro_overview_contrib}.
For ease of exposition, we focus on real channels. The extension to complex channels is notationally cumbersome, yet conceptually straightforward as demonstrated and noted in \cite{Davoodi2016,Davoodi2018}.

The first step is to reduce the number of channel coefficients through a canonical transformation, performed here for each subchannel.
This yields the partially connected channel model given by
\begin{equation}
\label{eq:received signal_2}
Y_{k}^{[m]}(t)= X_{k}^{[m]}(t) + \sum_{i = 1}^{k-1} G_{ki}^{[m]}(t) X_{i}^{[m]}(t)+ Z_k^{[m]}(t), \ m \in \mathcal{M}, k \in \mathcal{K}
\end{equation}
As shown in \cite{Davoodi2016} for the single-subchannel case,  the above transformation is enabled by the non-degenerate channel model assumption described in Section \ref{subsec:partial_CSIT}, where all values are bounded away from zero and infinity. It can be shown that the canonical channel in \eqref{eq:received signal_2} and the original channel in  \eqref{eq:received signal} have the same DoF (see the appendix of \cite{Davoodi2016} for more details).\footnote{Strictly speaking, in going from the original model in \eqref{eq:received signal} to the canonical model in \eqref{eq:received signal_2}, we obtain new channel coefficients and a new power constraint, which scales as $O(P)$.
With a slight abuse of notation and with no influence on the DoF result, however, we maintain the notation of the original model in this section.
Moreover, note that the partial CSIT model in \eqref{eq:channel model} is inherited by the canonical channel model in
\eqref{eq:received signal_2}.}

The next step is to convert the channel in \eqref{eq:received signal_2} into a deterministic equivalent channel with inputs and outputs all being integers \cite{Avestimehr2015}.
The equivalent deterministic channel is given by
\begin{equation}
\label{eq:received signal_det}
\bar{Y}_{k}^{[m]}(t) = \bar{X}_{k}^{[m]}(t) + \sum_{i = 1}^{k-1} \bigl\lfloor G_{ki}^{[m]}(t) \bar{X}_{i}^{[m]}(t) \bigr\rfloor, \ m \in \mathcal{M}, k \in \mathcal{K}
\end{equation}
where  $\bar{X}_{k}^{[m]}(t) \in \big\{0,\ldots, \lceil \sqrt{P} \rceil \big\}$ and $\bar{Y}_{k}^{[m]}(t) \in \mathbb{Z}$ are the corresponding inputs and outputs respectively.
As shown in \cite[Lem. 1]{Davoodi2016}, the above deterministic approximation has no influence on the DoF.
Therefore, we focus on the channel in \eqref{eq:received signal_det} henceforth.

In what follows, we use $\bar{P}$ to denote $\sqrt{P}$.
We also use $\mathcal{G}_{k}$ to denote the set of channel variables associated with user $k$, and $\mathcal{G}$ to denote the set of all channel variables, i.e.
\begin{equation}
\nonumber
\mathcal{G}_{k} \triangleq \bigl\{  G_{ki}^{[m]}(t) : i \in \langle k-1 \rangle, t \in \langle n \rangle, m \in \mathcal{M}  \bigr\}
\quad \text{and} \quad
\mathcal{G} \triangleq \bigl\{\mathcal{G}_{k} : k \in \mathcal{K} \bigr\}.
\end{equation}
Moreover, we use $\bar{X}_{k}^{[m]}$ with a suppressed time index to denote the sequence $\big( \bar{X}_{k}^{[m]}(1),\ldots, \bar{X}_{k}^{[m]}(n) \big)$, and $\bar{X}_{k}$ with a suppressed subchannel index to denote $\big(  \bar{X}_{k}^{[1]},\ldots,\bar{X}_{k}^{[M]} \big)$.
Similarly, we use $\bar{Y}_{k}^{[m]}$  and $\bar{Y}_{k}$ to denote $\big( \bar{Y}_{k}^{[m]}(1),\ldots, \bar{Y}_{k}^{[m]}(n) \big)$ and $\big(  \bar{Y}_{k}^{[1]},\ldots,\bar{Y}_{k}^{[M]} \big)$, respectively.

We proceed by invoking Fano's inequality from which we obtain
\begin{align}
\nonumber
nR_{k} & \leq I\left( W_{k}; \bar{Y}_{k} \mid W_{\langle k+1:K \rangle},\mathcal{G} \right) + o(n) \\
\label{eq:fano}
& = H\left( \bar{Y}_{k} \mid W_{\langle k+1:K \rangle},\mathcal{G} \right) -
H\left( \bar{Y}_{k} \mid W_{\langle k:K \rangle},\mathcal{G} \right) + o(n)
\end{align}
where $\bar{Y}_{k}\triangleq \big(\bar{Y}_{k}^{[1]},\ldots,\bar{Y}_{k}^{[M]} \big)$ and
$W_{\langle i:j \rangle} = \big(W_{i},\ldots,W_{j}\big)$. By ignoring the $o(n)$ term and adding the rate bounds in \eqref{eq:fano} for all $k \in \mathcal{K}$, we obatin
\begin{align}
\nonumber
n\sum_{k=1}^{K}R_{k}  & \leq
H\left( \bar{Y}_{K} \mid \mathcal{G}\right) -
H\left( \bar{Y}_{1} \mid W_{\langle 1:K \rangle},\mathcal{G}\right)
+ \sum_{k=2}^{K} H\left( \bar{Y}_{k-1} \mid W_{\langle k:K \rangle},\mathcal{G}\right) -
H\left( \bar{Y}_{k} \mid W_{\langle k:K \rangle},\mathcal{G}\right) \\
\label{eq:n_sum_R_k_upperbound}
& \leq n M \log \left( \bar{P} \right) +n o\left( \log \left( \bar{P} \right) \right)
+ \sum_{k=2}^{K}
\underbrace{H\left( \bar{Y}_{k-1} \mid W_{\langle k:K \rangle},\mathcal{G}\right) -
H\left( \bar{Y}_{k} \mid W_{\langle k:K \rangle},\mathcal{G}\right)}_{\triangleq H_{k}^{\Delta}}.
\end{align}
In a DoF sense, the inequality in \eqref{eq:n_sum_R_k_upperbound} becomes:
\begin{equation}
\label{eq:sum_DoF_bound_H_delta}
\sum_{k = 1}^{K} d_{k} \leq 1 + \limsup_{P \rightarrow \infty} \limsup_{n \rightarrow \infty}
\sum_{k = 2}^{K} \frac{H_{k}^{\Delta}}{nM\log(\bar{P})}
\end{equation}
The problem hence reduces to bounding the difference of entropies $H_{k}^{\Delta}$ in a DoF sense.
\begin{lemma}
\label{lemma:diff_entropies}
$H_{k}^{\Delta}$, for any $k \in \langle 2: K \rangle$, is upper bounded in the DoF sense as
\begin{equation}
\label{eq:delta_k upperbound}
\limsup_{P \rightarrow \infty} \limsup_{n \rightarrow \infty}
 \frac{H_{k}^{\Delta}}{nM\log(\bar{P})} \leq \frac{1}{M}  \sum_{m=1}^{M}\alpha_{k}^{[m]} = \alpha_{k}.
\end{equation}
\end{lemma}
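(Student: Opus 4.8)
The plan is to bound $H_{k}^{\Delta}$ by the logarithm of the cardinality of an \emph{aligned image set}, and then control that cardinality through the per-subchannel channel uncertainty levels, extending the AIS argument of \cite{Davoodi2016}. First I would replace the difference of entropies by a single conditional entropy: since $H(\bar{Y}_{k-1}\mid\cdots)\leq H(\bar{Y}_{k-1},\bar{Y}_{k}\mid\cdots)=H(\bar{Y}_{k}\mid\cdots)+H(\bar{Y}_{k-1}\mid\bar{Y}_{k},\cdots)$, subadditivity gives
\begin{equation}
H_{k}^{\Delta}\leq H\big(\bar{Y}_{k-1}\mid \bar{Y}_{k},W_{\langle k:K\rangle},\mathcal{G}\big).
\end{equation}
Here it is essential that each antenna input is a function of \emph{all} messages and of the available (estimated) CSIT, so that conditioning on $W_{\langle k:K\rangle}$ does not pin down $\bar{Y}_{k-1}$ given $\bar{Y}_{k}$. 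Conditioned on $\big(\bar{Y}_{k}=\bar{y}_{k},W_{\langle k:K\rangle},\mathcal{G}\big)$, the output $\bar{Y}_{k-1}$ lives on the aligned image set $\mathcal{Y}_{k-1}(\bar{y}_{k})$, defined as the set of distinct values of $\bar{Y}_{k-1}$ produced by input tuples $\big(\bar{X}_{1},\ldots,\bar{X}_{k-1}\big)$ consistent with the observed $\bar{y}_{k}$. Bounding conditional entropy by the log of the support size and applying Jensen's inequality to the concave logarithm then reduces the problem to controlling an expected cardinality,
\begin{equation}
H_{k}^{\Delta}\leq \E\big[\log\big|\mathcal{Y}_{k-1}(\bar{Y}_{k})\big|\big]\leq \log \E\big[\big|\mathcal{Y}_{k-1}(\bar{Y}_{k})\big|\big],
\end{equation}
where the expectation runs over the conditioning variables and, crucially, over the channel realizations in $\mathcal{G}$.

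The core step is bounding $\E\big[|\mathcal{Y}_{k-1}(\bar{Y}_{k})|\big]$. Two distinct input tuples generically induce distinct images at receiver $k-1$ but contribute to a common value of $\bar{Y}_{k}$ only if they \emph{align} at receiver $k$. Because the transmitter can cancel interference at receiver $k$ only up to its knowledge of user $k$'s channel, whose unknown part scales as $\bar{P}^{-\alpha_{k}^{[m]}}$ in subchannel $m$, the residual cancellation error in subchannel $m$ spreads $\bar{Y}_{k}$ over $O(\bar{P}^{1-\alpha_{k}^{[m]}})$ values per signalling dimension, while $\bar{Y}_{k-1}$ ranges over $O(\bar{P})$ values. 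Invoking the bounded density assumption to bound the probability of alignment, this limits the number of distinct $\bar{Y}_{k-1}$ images folded onto a common $\bar{Y}_{k}$ to order $\bar{P}^{\alpha_{k}^{[m]}}$ per dimension $(m,t)$. Note that it is user $k$'s uncertainty (the ``folding'' receiver), and hence $\alpha_{k}^{[m]}$ rather than $\alpha_{k-1}^{[m]}$, that governs this count.

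The hard part, and the key departure from \cite{Davoodi2016}, is aggregating these per-dimension bounds when the uncertainty level $\alpha_{k}^{[m]}$ varies across subchannels. Since the channel coefficients are drawn independently across the $M$ subchannels and the alignment constraints decouple per subchannel and per channel use, the per-dimension exponents should \emph{add}, yielding
\begin{equation}
\log \E\big[\big|\mathcal{Y}_{k-1}(\bar{Y}_{k})\big|\big]\leq \Big(n\sum_{m=1}^{M}\alpha_{k}^{[m]}+o(n)\Big)\log(\bar{P}).
\end{equation}
The delicate point is to carry out the combinatorial image-counting jointly over all $nM$ dimensions while tracking the distinct per-subchannel uncertainty scales, so that the total exponent aggregates to the \emph{sum} $\sum_{m}\alpha_{k}^{[m]}$ and is neither dominated by a single subchannel nor over-counted. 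I expect this joint counting—rather than any single estimate—to be the main technical obstacle, and it is precisely the modification alluded to in Section \ref{subsec:intro_overview_contrib}.

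Finally, dividing by $nM\log(\bar{P})$ and taking the iterated limits gives
\begin{equation}
\limsup_{P\to\infty}\limsup_{n\to\infty}\frac{H_{k}^{\Delta}}{nM\log(\bar{P})}\leq \frac{1}{M}\sum_{m=1}^{M}\alpha_{k}^{[m]}=\alpha_{k},
\end{equation}
which is the claimed bound. As a consistency check, the extreme cases behave correctly: $\alpha_{k}^{[m]}=0$ forces an $O(1)$ image set in subchannel $m$ (no alignment, zero contribution), while $\alpha_{k}^{[m]}=1$ permits a full contribution of one per dimension, recovering the perfect-CSIT behaviour.
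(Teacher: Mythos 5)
Your overall strategy is the right one, and you correctly identify both the key quantity (user $k$'s per-subchannel uncertainty $\alpha_{k}^{[m]}$ governs the alignment count, not $\alpha_{k-1}^{[m]}$) and the key modification relative to \cite{Davoodi2016} (per-dimension exponents must be tracked and summed across subchannels). However, there is a genuine gap at your very first step, and it is precisely the issue that the paper's \emph{functional dependence} step (Appendix \ref{appendix:proof_lemma_diff_entropies}, eqs.\ \eqref{eq:H_diff_upper_bound}--\eqref{eq:fun_dependence}) exists to solve. Your inequality $H_{k}^{\Delta}\leq H\big(\bar{Y}_{k-1}\mid\bar{Y}_{k},W_{\langle k:K\rangle},\mathcal{G}\big)$ is valid, but the conditional support of $\bar{Y}_{k-1}$ given $\bar{Y}_{k}=\bar{y}_{k}$ is not a set you can count by pairwise alignment probabilities: a value $\lambda$ belongs to it as soon as \emph{some} input tuple in the conditional support of the encoder output has image $\lambda$ at receiver $k-1$ and image $\bar{y}_{k}$ at receiver $k$. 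Bounding $\E\big[|\mathcal{Y}_{k-1}(\bar{Y}_{k})|\big]$ by summing probabilities of alignment therefore forces a union bound over all such codewords---up to $2^{n(R_{1}+\cdots+R_{k-1})}$ of them per image value---rather than over the $O(\bar{P})$ output values per dimension. The resulting bound is either circular (it reintroduces the very rates being bounded) or vacuous. Concretely, for $k=2$ the canonical model gives $\bar{Y}_{1}=\bar{X}_{1}$ and $\bar{Y}_{2}=\bar{X}_{2}+\lfloor G_{21}\bar{X}_{1}\rfloor$: if ``input tuples consistent with $\bar{y}_{2}$'' ranges over the input alphabet, then for a constant fraction of values $\lambda$ one can solve $\bar{x}_{2}=\bar{y}_{2}-\lfloor G_{21}\lambda\rfloor$ exactly, so your set has expected cardinality $\Theta(\bar{P}^{n M})$ and your bound degenerates to the one with $\alpha_{k}^{[m]}$ replaced by $1$, which proves nothing.

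The paper's remedy is surgery on the joint distribution \emph{before} any counting: fix a deterministic map $\mathcal{L}_{0}$ from $(\bar{Y}_{k-1},\mathcal{G}_{k-1})$ to a single representative codeword, chosen to minimize $H\big(\bar{Y}_{k}\mid\mathcal{G},\mathcal{L}\big)$ so that the substitution can only increase $H_{k}^{\Delta}$ (eq.\ \eqref{eq:H_diff_upper_bound}). This makes $\bar{Y}_{k}$ a deterministic function of $(\bar{Y}_{k-1},\mathcal{G})$, so the aligned image sets \eqref{eq:aligned_image_set} partition the support of $\bar{Y}_{k-1}$ with exactly one codeword attached to each element, and $\E\big[|\mathcal{A}_{\nu}(\mathcal{G})|\big]$ becomes a sum, over output values only, of pairwise alignment probabilities \eqref{eq:average_AIS}; each such probability is controlled by the bounded density of user $k$'s coefficients, whose conditional peak scales as $f_{\max}\bar{P}^{\alpha_{k}^{[m]}}$. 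With that device in place, the remainder of your outline does match the paper: exponents add across the $nM$ dimensions and the harmonic sums contribute only $o(\log\bar{P})$ per dimension. Note, though, that you assert these steps rather than prove them; the paper's actual work there---relating input differences to output differences \eqref{eq:prob_bound_t_m_2}, truncating the probability at $1$ \eqref{eq:prob_bound_t_m_3}, and the interchange of sums and products \eqref{eq:average_AIS_bound_0}---is routine but necessary, and it is exactly where the per-subchannel exponents $\alpha_{k}^{[m]}$ are tracked so that they aggregate to $n\sum_{m=1}^{M}\alpha_{k}^{[m]}$ rather than to $nM\max_{m}\alpha_{k}^{[m]}$.
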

The proof of Lemma \ref{lemma:diff_entropies} is relegated to Appendix \ref{appendix:proof_lemma_diff_entropies}.
Finally, from \eqref{eq:delta_k upperbound} and \eqref{eq:sum_DoF_bound_H_delta}, the sum-DoF bound in \eqref{eq:sum_DoF_bound}
is obtained, hence concluding the proof of outer bound.
\section{Sufficiency of Total Order for Separability}
\label{sec:suff_total_order}
In this section, we prove the \emph{if} part of Theorem \ref{th:total_order} by showing that
\begin{equation}
\label{eq:if_sep}
\mathbf{A} \in \mathcal{A}_{\mathrm{to}} \implies
\mathcal{D}_{\mathrm{sep}} = \mathcal{D}_{\mathrm{out}}.
\end{equation}
As $\mathcal{D}_{\mathrm{sep}} \subseteq \mathcal{D}_{\mathrm{out}}$ holds from Theorem \ref{th:outer_bound},
it is only required to show that $\mathcal{D}_{\mathrm{out}} \subseteq \mathcal{D}_{\mathrm{sep}}$,
i.e. for any DoF tuple $\mathbf{d} \in \mathcal{D}_{\mathrm{out}}$,
there exists $\mathbf{d}^{[m]} \in \mathcal{D}^{[m]}$, for every $m \in \mathcal{M}$,
such that  $\mathbf{d} = \frac{1}{M} \sum_{m=1}^{M} \mathbf{d}^{[m]}$.

To show the above, we start by shedding new light on the single-subchannel DoF region $\mathcal{D}^{[m]}$,
described in \eqref{eq:DoF_region_m}.
In particular, we revisit and simplify the achievability argument used in \cite{Piovano2017}
by showing that it is sufficient to vary only one power control variable, in contrast to the $K$ variables employed in \cite{Piovano2017}, to achieve all points of the single-subchannel DoF region.
This leads to an equivalent representation of $\mathcal{D}^{[m]}$, and any polyhedron from the class in Definition \ref{def:pol_region},
which we then use in the second part of this section to show that the statement in \eqref{eq:if_sep} holds.
\subsection{Equivalent DoF Region Representation}
\label{subsec:one_power_variable}
As we focus on the single-subchannel case in this part, we may assume, without loss of generality, that
$\alpha_{1}^{[m]} \geq \alpha_{2}^{[m]}  \geq \cdots \geq \alpha_{K}^{[m]}$ for the subchannel $m$ of interest.
The DoF region $\mathcal{D}^{[m]}$ for subchannel $m$, which is described in \eqref{eq:DoF_region_m},
hence becomes 
\begin{equation}
\label{eq:DoF_region_single_subchannel_order}
\mathcal{D}^{[m]} =
\Big\{ \mathbf{d}^{[m]} \in \mathbb{R}_{+}^{K} : \mathbf{d}^{[m]} (\mathcal{S}) \leq  1 + \bm{\alpha}^{[m]}\big(\mathcal{S}
\setminus \{\min \mathcal{S}\} \big), \ \mathcal{S} \subseteq \mathcal{K} \Big\}.
\end{equation}
As alluded to in Section \ref{subsec:intro_MISO_BC}, the fact that the right-hand-side of \eqref{eq:DoF_region_single_subchannel_order} is an outer bound for $\mathcal{D}^{[m]}$ follows as a direct consequence of the sum-DoF upper bound in \cite[Th. 1]{Davoodi2016}.
On the other hand, the achievability of $\mathcal{D}^{[m]}$, proved in \cite{Piovano2017},
is based on rate-splitting with the superposition of private (zero-forcing) and common (multicasting) codewords
\cite{Yang2013,Clerckx2016}.
We present our own take on the approach in \cite{Piovano2017}, focusing on parts most essential for deriving our
alternative simplified representation.
\begin{itemize}
\item First, the DoF region achieved through rate-splitting and power assignment, which we denote by $\mathcal{D}_{\mathrm{RS}}^{[m]\star}$,
is characterized as the set of all DoF tuples $\mathbf{d}^{[m]} = (d_{1}^{[m]},\ldots,d_{K}^{[m]}) \in \mathbb{R}_{+}^{K}$ satisfying
\begin{subequations}
\label{eq:DoF_region_RS}
\begin{align}
  &(d_{1}^{[m]},\ldots,d_{K}^{[m]}) = (d_{1}^{[m](\mathrm{p})},\ldots,d_{K}^{[m](\mathrm{p})}) + (d_{1}^{[m](\mathrm{c})},\ldots,d_{K}^{[m](\mathrm{c})}) \\
  & d_{i}^{[m](\mathrm{p})} \geq 0, \ d_{i}^{[m](\mathrm{c})} \geq 0, \ i \in \mathcal{K} \\
  &d_{i}^{[m](\mathrm{p})} \leq \Big( a_{i}^{[m]} -  \big( \max_{j \neq i}  a_{j}^{[m]}  - \alpha_{i}^{[m]}   \big)^{+}
  \Big)^{+}, \ i \in \mathcal{K} \\
  &\sum_{i \in \mathcal{K}} d_{i}^{[m](\mathrm{c})} \leq 1 - \max_{j \in \mathcal{K}}a_{j}^{[m]} \\
  &0 \leq a_{i}^{[m]} \leq  1, \ i \in \mathcal{K}.
\end{align}
\end{subequations}
In the above, $\mathbf{d}^{[m](\mathrm{p})} = (d_{1}^{[m](\mathrm{p})},\ldots,d_{K}^{[m](\mathrm{p})}) \in \mathbb{R}_{+}^{K}$  and
$ \mathbf{d}^{[m](\mathrm{c})} = (d_{1}^{[m](\mathrm{c})},\ldots,d_{K}^{[m](\mathrm{c})}) \in \mathbb{R}_{+}^{K}$
are the private and common DoF tuples, associated with the private (zero-forcing) and common (multicasting) signals, respectively.
On the other hand, $\mathbf{a}^{[m]} \triangleq (a_{1}^{[m]},\ldots,a_{K}^{[m]}) \in [0,1]^{K}$ are the power control variables associated with the $K$ private signals, i.e. the power assigned to the $k$-th private signal scales as $O(P^{a_{k}})$.
For a more detailed exposition of the above scheme and its achievable DoF region in \eqref{eq:DoF_region_RS}, readers are referred to \cite{Clerckx2016,Piovano2017} and references therein.

It is evident that each DoF tuple $\mathbf{d}^{[m]} \in \mathcal{D}_{\mathrm{RS}}^{[m]\star}$ is achieved through a strategy
identified by a pair $\big(\mathbf{a}^{[m]},\mathbf{d}^{[m](\mathrm{c})}\big)$, where $\mathbf{a}^{[m]}$ governs
the private DoF tuple $\mathbf{d}^{[m](\mathrm{p})}$ and the common sum-DoF $\mathbf{d}^{[m](\mathrm{c})}(\mathcal{K}) = \sum_{i \in \mathcal{K}} d_{i}^{[m](\mathrm{c})}$, while the individual entries of $\mathbf{d}^{[m](\mathrm{c})}$ determine the manner in which the common sum-DoF is assigned across the $K$ users.
\item The achievable DoF region in \eqref{eq:DoF_region_RS} is shown to coincide with the optimal DoF region, i.e. $\mathcal{D}_{\mathrm{RS}}^{[m]\star} = \mathcal{D}^{[m]}$.
This is accomplished in \cite{Piovano2017} by an exhaustive characterization of
the faces describing the polyhedral outer bound in \eqref{eq:DoF_region_single_subchannel_order}, with the aid of
induction to cope with an arbitrary $K$,
and then explicitly tuning the pair $\big(\mathbf{a}^{[m]},\mathbf{d}^{[m](\mathrm{c})}\big)$
to achieve each such face.
\end{itemize}
Now suppose that we further restrict the tuple of power allocation variables in \eqref{eq:DoF_region_RS} such that
$\mathbf{a}^{[m]} = (a^{[m]},\ldots,a^{[m]})$, where $a^{[m]} \in [0,1]$. By doing so, we essentially reduce the $K$ power allocation variables $(a_{1}^{[m]},\ldots,a_{K}^{[m]})$, employed in $\mathcal{D}_{\mathrm{RS}}^{[m]\star}$, to a single variable $a^{[m]}$.
The resulting achievable region, denoted by $\mathcal{D}_{\mathrm{RS}}^{[m]}$, is given by all DoF tuples $\mathbf{d}^{[m]} = (d_{1}^{[m]},\ldots,d_{K}^{[m]}) \in \mathbb{R}_{+}^{K}$
that satisfy
\begin{subequations}
\label{eq:DoF_region_RS_a}
\begin{align}
  &(d_{1}^{[m]},\ldots,d_{K}^{[m]}) = (d_{1}^{[m](\mathrm{p})},\ldots,d_{K}^{[m](\mathrm{p})}) + (d_{1}^{[m](\mathrm{c})},\ldots,d_{K}^{[m](\mathrm{c})}) \\
  & d_{i}^{[m](\mathrm{p})} \geq 0, \ d_{i}^{[m](\mathrm{c})} \geq 0, \ i \in \mathcal{K} \\
  \label{eq:DoF_region_RS_a_3}
  &d_{i}^{[m](\mathrm{p})} \leq \min\{a^{[m]},\alpha_{i}^{[m]}\}, \ i \in \mathcal{K} \\
  &\sum_{i \in \mathcal{K}} d_{i}^{[m](\mathrm{c})} \leq 1 - a^{[m]} \\
  &0 \leq a^{[m]} \leq  1.
\end{align}
\end{subequations}
Due to the additional constraints of $a_{i}^{[m]} = a^{[m]}$, for all $i \in \mathcal{K}$, it is evident that $\mathcal{D}_{\mathrm{RS}}^{[m]} \subseteq \mathcal{D}_{\mathrm{RS}}^{[m]\star}$.
Nevertheless, such restriction turns out to be lossless in the DoF sense.
\begin{lemma}
\label{lem:single_subchannel}
For the single-subchannel case, the achievable DoF region $\mathcal{D}_{\mathrm{RS}}^{[m]}$ described in \eqref{eq:DoF_region_RS_a} coincides with the optimal DoF region $\mathcal{D}^{[m]}$ described in \eqref{eq:DoF_region_single_subchannel_order}.
\end{lemma}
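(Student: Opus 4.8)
The inclusion $\mathcal{D}_{\mathrm{RS}}^{[m]} \subseteq \mathcal{D}^{[m]}$ is immediate, since $\mathcal{D}_{\mathrm{RS}}^{[m]} \subseteq \mathcal{D}_{\mathrm{RS}}^{[m]\star} = \mathcal{D}^{[m]}$. The whole content is therefore the reverse inclusion $\mathcal{D}^{[m]} \subseteq \mathcal{D}_{\mathrm{RS}}^{[m]}$, and throughout I drop the subchannel index and keep the order $\alpha_{1} \geq \cdots \geq \alpha_{K}$. The first step is to eliminate the common-DoF variables from \eqref{eq:DoF_region_RS_a}: for a fixed power level $a$, the residual budget $\sum_{i} d_{i}^{[m](\mathrm{c})} \leq 1 - a$ is easiest to meet by assigning each user the maximal admissible private DoF $d_{i}^{[m](\mathrm{p})} = \min\{d_{i}, a, \alpha_{i}\}$, which leaves common DoF $\big( d_{i} - \min\{a,\alpha_{i}\}\big)^{+}$. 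A target tuple $\mathbf{d}$ thus lies in $\mathcal{D}_{\mathrm{RS}}^{[m]}$ if and only if there exists $a \in [0,1]$ with
\[
g(a) \triangleq a + \sum_{i \in \mathcal{K}} \big( d_{i} - \min\{a,\alpha_{i}\} \big)^{+} \leq 1 ,
\]
so it suffices to show that $\min_{a \in [0,1]} g(a) \leq 1$ for every $\mathbf{d} \in \mathcal{D}^{[m]}$.

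The second step is to locate this minimum. Each summand is convex and piecewise linear, with slope $-1$ exactly while $a < \min\{\alpha_{i}, d_{i}\}$ and slope $0$ thereafter, so $g$ is convex with slope $1 - |\{ i : \min\{\alpha_{i},d_{i}\} > a \}|$. The plan is to take $a^{\star}$ equal to the second largest value among $\{\min\{\alpha_{i},d_{i}\}\}_{i \in \mathcal{K}}$, which is then a minimizer; I let $i_{1}$ be an index attaining the largest such value and collect the users still needing common DoF in $\mathcal{T} \triangleq \{ i : d_{i} > \min\{a^{\star},\alpha_{i}\} \}$, so that $g(a^{\star}) = a^{\star} + \mathbf{d}(\mathcal{T}) - \sum_{i \in \mathcal{T}} \min\{a^{\star},\alpha_{i}\}$.

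The third step is to certify $g(a^{\star}) \leq 1$ using a single sum-DoF facet of $\mathcal{D}^{[m]}$. When some user of $\mathcal{T}$ has $\alpha_{i} \geq a^{\star}$, I apply the constraint $\mathbf{d}(\mathcal{T}) \leq 1 + \bm{\alpha}(\mathcal{T}) - \max_{i \in \mathcal{T}} \alpha_{i}$; the choice of $a^{\star}$ forces at most one index of $\mathcal{T}$ to have $\alpha_{i} > a^{\star}$ (namely $i_{1}$, since any such index would have $\min\{\alpha_{i},d_{i}\} > a^{\star}$), which reduces the claim to
\[
\sum_{i \in \mathcal{T}} (\alpha_{i} - a^{\star})^{+} \leq \max_{i \in \mathcal{T}} \alpha_{i} - a^{\star},
\]
an identity. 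In the complementary case every user of $\mathcal{T}$ has $\alpha_{i} < a^{\star}$, which is only possible when $\min\{\alpha_{i_{1}},d_{i_{1}}\} = a^{\star}$ with $d_{i_{1}} = a^{\star}$ and $i_{1} \notin \mathcal{T}$; here I instead apply the constraint on $\mathcal{S} \triangleq \mathcal{T} \cup \{i_{1}\}$ and substitute $\mathbf{d}(\mathcal{T}) = \mathbf{d}(\mathcal{S}) - a^{\star}$, obtaining $g(a^{\star}) = \mathbf{d}(\mathcal{S}) - \sum_{i \in \mathcal{T}} \alpha_{i} \leq 1$ because $\max_{i \in \mathcal{S}} \alpha_{i} = \alpha_{i_{1}}$.

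I expect this last step to be the main obstacle: correctly pairing the minimizer $a^{\star}$ with the facet that certifies it, and in particular isolating the degenerate tie $\min\{\alpha_{i_{1}},d_{i_{1}}\} = \min\{\alpha_{i_{2}},d_{i_{2}}\}$ at the top that forces augmenting $\mathcal{T}$ by $i_{1}$. By contrast, the convexity and breakpoint analysis of $g$ and the reduction to the single power variable are routine bookkeeping.
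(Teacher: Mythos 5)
Your proof is correct, and it takes a genuinely different route from the paper's. The paper proves the lemma purely polyhedrally (Appendix \ref{appendix:proof_lemma_single_subchannel}): it substitutes $d_{i}^{(\mathrm{p})} = d_{i} - d_{i}^{(\mathrm{c})}$, tightens the common sum constraint to an equality so that $a = 1 - \mathbf{d}^{(\mathrm{c})}(\mathcal{K})$ can also be substituted out, and then eliminates $d_{1}^{(\mathrm{c})},\ldots,d_{K}^{(\mathrm{c})}$ by an inductive Fourier--Motzkin procedure, pruning redundant inequalities at each step until exactly the system \eqref{eq:DoF_region_single_subchannel_order} remains. You instead reduce membership of a target $\mathbf{d}$ in \eqref{eq:DoF_region_RS_a} to the one-dimensional condition $\min_{a \in [0,1]} g(a) \leq 1$, with $g(a) = a + \sum_{i \in \mathcal{K}} \big(d_{i} - \min\{a,\alpha_{i}\}\big)^{+}$, via the greedy private assignment $d_{i}^{(\mathrm{p})} = \min\{d_{i},a,\alpha_{i}\}$; you then locate the minimizer $a^{\star}$ (the second-largest $\min\{\alpha_{i},d_{i}\}$, which lies in $[0,1]$ because the singleton facets give $d_{i} \leq 1$) and certify $g(a^{\star}) \leq 1$ with a single well-chosen facet of \eqref{eq:DoF_region_single_subchannel_order}. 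I checked both of your cases, including the degenerate one where $i_{1} \notin \mathcal{T}$ and the facet must be applied to $\mathcal{T} \cup \{i_{1}\}$, and they are sound; the trivial edge cases $\mathcal{T} = \emptyset$ and $K = 1$ also go through. What each approach buys: the paper's elimination is mechanical and establishes equality of the two polyhedra in one sweep, at the cost of heavy bookkeeping; your argument is shorter and constructive, producing for every point of the region an explicit power level and common-DoF split --- an operational tuning recipe --- and it exposes the duality-like pairing between the optimal $a^{\star}$ and the tight sum-DoF facet, which the elimination proof leaves implicit. One caveat: you dispatch the easy inclusion $\mathcal{D}_{\mathrm{RS}}^{[m]} \subseteq \mathcal{D}^{[m]}$ by invoking $\mathcal{D}_{\mathrm{RS}}^{[m]\star} = \mathcal{D}^{[m]}$ from \cite{Piovano2017}; since the paper advertises this lemma precisely as an alternative proof of that result, it is cleaner (and costs one line) to argue instead that every point of \eqref{eq:DoF_region_RS_a} is achievable by the rate-splitting scheme while \eqref{eq:DoF_region_single_subchannel_order} is the outer bound of \cite{Davoodi2016}, keeping the argument self-contained.
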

Lemma \ref{lem:single_subchannel} is proved by eliminating all auxiliary variables in \eqref{eq:DoF_region_RS_a}
and showing that the resulting polyhedron coincides with the one in \eqref{eq:DoF_region_single_subchannel_order}.
This is accomplished through a series of simplifying reductions, followed by an inductive Fourier-Motzkin
elimination procedure. The details of this proof are relegated to
Appendix \ref{appendix:proof_lemma_single_subchannel}.
\subsection{Proof of \eqref{eq:if_sep}}
\label{subsec:separability}
Equipped with Lemma \ref{lem:single_subchannel}, we proceed to show  that under the total order condition in \eqref{eq:total_CSIT_order}, we have
$\mathcal{D}_{\mathrm{sep}} = \mathcal{D}_{\mathrm{out}}$.
Due to \eqref{eq:total_CSIT_order}, we have $\max_{i \in \mathcal{S}}\{ \alpha_{i}^{[m]}\} = \alpha_{\min \mathcal{S}}^{[m]}$  for all subchannels $m \in \mathcal{M}$, and hence 
$\mathcal{D}^{[m]}$ in \eqref{eq:DoF_region_m} is now described by \eqref{eq:DoF_region_single_subchannel_order}, for all $m \in \mathcal{M}$.
Therefore, we may restate $\mathcal{D}_{\mathrm{sep}} = \mathcal{D}_{\mathrm{out}}$ as:
\begin{equation}
\label{eq:Minkowski_sum_equality}
\frac{1}{M} \left( \mathcal{D}^{[1]} \oplus \cdots \oplus \mathcal{D}^{[M]}  \right) = \Big\{ \mathbf{d} \in \mathbb{R}_{+}^{K} : \mathbf{d}(\mathcal{S}) \leq 1 + \frac{1}{M} \sum_{m \in \mathcal{M}} \bm{\alpha}^{[m]}\big(\mathcal{S} \setminus \{\min \mathcal{S}\} \big), \ \mathcal{S} \subseteq \mathcal{K} \Big\}.
\end{equation}
From the associativity of additions, including Minkowski additions,
it is sufficient to show that the equality in \eqref{eq:Minkowski_sum_equality} holds for $M = 2$.
Hence, in what follows we focus on showing that
\begin{equation}
\label{eq:Minkowski_sum_equality_M_2}
\frac{1}{2} \left( \mathcal{D}^{[1]} \oplus \mathcal{D}^{[2]}\right) = \Big\{ \mathbf{d} \in \mathbb{R}_{+}^{K} : \mathbf{d}(\mathcal{S}) \leq 1 + \frac{1}{2} \bm{\alpha}^{[1]}\big(\mathcal{S} \setminus \{\min \mathcal{S}\} \big) + \frac{1}{2}  \bm{\alpha}^{[2]}\big(\mathcal{S} \setminus \{\min \mathcal{S}\} \big), \ \mathcal{S} \subseteq \mathcal{K} \Big\}.
\end{equation}
As highlighted at the beginning of this section, it suffices to show that $ \mathcal{D}_{\mathrm{out}} \subseteq \frac{1}{2} \left( \mathcal{D}^{[1]} \oplus \mathcal{D}^{[2]}\right)$, i.e. for any $\mathbf{d} \in \mathcal{D}_{\mathrm{out}}$, there exists $\mathbf{d}^{[1]} \in \mathcal{D}^{[1]}$ and $\mathbf{d}^{[2]} \in \mathcal{D}^{[2]}$
such that $\mathbf{d} \leq \frac{1}{2} \big( \mathbf{d}^{[1]} + \mathbf{d}^{[2]} \big)$.

Recalling that both $\mathcal{D}_{\mathrm{out}}$ and $\mathcal{D}^{[m]}$ belong to the same class of polyhedra in Definition \ref{def:pol_region},
it follows from Lemma \ref{lem:single_subchannel} that $\mathcal{D}_{\mathrm{out}}$ also assumes an equivalent representation
as the one in \eqref{eq:DoF_region_RS_a}.
In particular,  $\mathcal{D}_{\mathrm{out}}$ is equivalent to all DoF tuples $\mathbf{d} = (d_{1},\ldots,d_{K}) \in \mathbb{R}_{+}^{K}$
that satisfy
\begin{subequations}
\label{eq:DoF_region_RS_a_out}
\begin{align}
\label{eq:DoF_region_RS_a_out_1}
  &(d_{1},\ldots,d_{K}) = (d_{1}^{(\mathrm{p})},\ldots,d_{K}^{(\mathrm{p})}) + (d_{1}^{(\mathrm{c})},\ldots,d_{K}^{(\mathrm{c})}) \\
  & d_{i}^{(\mathrm{p})} \geq 0, \ d_{i}^{(\mathrm{c})} \geq 0, \ i \in \mathcal{K} \\
  \label{eq:DoF_region_RS_a_out_3}
  &d_{i}^{(\mathrm{p})} \leq \min\{a,\alpha_{i}\}, \ i \in \mathcal{K} \\
  \label{eq:DoF_region_RS_a_out_4}
  &\sum_{i \in \mathcal{K}} d_{i}^{(\mathrm{c})} \leq 1 - a \\
  &0 \leq a \leq  1.
\end{align}
\end{subequations}
Note that the average CSIT state, with entries used in \eqref{eq:DoF_region_RS_a_out}, is given by
\begin{equation}
\label{eq:av_CSIT_M_2}
\bm{\alpha} = (\alpha_{1},\ldots,\alpha_{K}) = \frac{1}{2}\big(\alpha_{1}^{[1]},\ldots,\alpha_{K}^{[1]}\big) + \frac{1}{2}\big(\alpha_{1}^{[2]},\ldots,\alpha_{K}^{[2]}\big).
\end{equation}
It follows from \eqref{eq:DoF_region_RS_a_out} that for any $\mathbf{d} \in \mathcal{D}_{\mathrm{out}}$,
there exists  $\mathbf{d}^{(\mathrm{p})} \in \mathbb{R}_{+}^{K}$, $\mathbf{d}^{(\mathrm{c})}\in \mathbb{R}_{+}^{K}$, $a \in [0,1]$ and
$\bm{\lambda} = (\lambda_{1}, \ldots, \lambda_{K}) \in \mathbb{R}_{+}^{K}$, with $\bm{\lambda}(\mathcal{K}) = 1$,
such that:
\begin{align}
\label{eq:d_proof_sum}
\mathbf{d} & \leq \mathbf{d}^{(\mathrm{p})} + \mathbf{d}^{(\mathrm{c})} \\
\label{eq:d_p_proof_sum}
\mathbf{d}^{(\mathrm{p})} & = \big( \min\{a,\alpha_{1}\}, \min\{a,\alpha_{2}\}, \ldots,\min\{a,\alpha_{K}\} \big) \\
\label{eq:d_c_proof_sum}
\mathbf{d}^{(\mathrm{c})} &= \big( (1-a)\lambda_{1},\ldots, (1-a)\lambda_{K} \big).
\end{align}
Note that the (possible) looseness in \eqref{eq:d_proof_sum}, compared to \eqref{eq:DoF_region_RS_a_out_1}, is introduced
to compensate for the imposed tightness in \eqref{eq:d_p_proof_sum}  and \eqref{eq:d_c_proof_sum}, compared to
\eqref{eq:DoF_region_RS_a_out_3} and \eqref{eq:DoF_region_RS_a_out_4}, respectively.

From the order $1 = \alpha_{0} \geq \alpha_{1} \geq \alpha_{2} \geq \cdots \geq \alpha_{K} \geq \alpha_{K+1} = 0$,
it follows that there exists $i \in \langle 1:K+1 \rangle$ such that $\alpha_{i-1} \geq a \geq \alpha_{i}$.
Therefore,  $\mathbf{d}^{(\mathrm{p})}$ in \eqref{eq:d_p_proof_sum} can be rewritten as
\begin{equation}
\label{eq:d_p_proof_sum_2}
\mathbf{d}^{(\mathrm{p})}  = \big( \underbrace{a, \ldots, a}_{i-1 \ \text{entries}},\alpha_{i}, \ldots,\alpha_{K} \big).
\end{equation}
From \eqref{eq:av_CSIT_M_2}, we know that $\alpha_{i} = \frac{1}{2}\alpha_{i}^{[1]} + \frac{1}{2}\alpha_{i}^{[2]}$  and
$\alpha_{i-1} = \frac{1}{2}\alpha_{i-1}^{[1]} + \frac{1}{2}\alpha_{i-1}^{[2]}$.
Combining this with the fact that $\alpha_{i}^{[1]} \leq \alpha_{i-1}^{[1]}$ and $\alpha_{i}^{[2]} \leq \alpha_{i-1}^{[2]}$,
which holds due to the total order in \eqref{eq:total_CSIT_order}, it follows that the
interval $[\alpha_{i}, \alpha_{i-1}] \subset \mathbb{R}_{+}$ is equal to a Minkowski sum of two intervals given by:
\begin{equation}
[\alpha_{i}, \alpha_{i-1}] = \frac{1}{2} \cdot \big[\alpha_{i}^{[1]}, \alpha_{i-1}^{[1]} \big] \oplus \frac{1}{2} \cdot \big[\alpha_{i}^{[2]}, \alpha_{i-1}^{[2]}
\big].
\end{equation}
Therefore, we may express the variable $a$, which is in $[\alpha_{i}, \alpha_{i-1}]$, as:
\begin{equation}
\label{eq:a_decomp_proof_sum}
a = \frac{1}{2} \big( a^{[1]} + a^{[2]} \big), \ \text{for some} \ a^{[1]} \in \big[\alpha_{i}^{[1]}, \alpha_{i-1}^{[1]} \big] \ \text{and} \    a^{[2]} \in \big[\alpha_{i}^{[2]}, \alpha_{i-1}^{[2]} \big].
\end{equation}
It follows from \eqref{eq:a_decomp_proof_sum} that $\mathbf{d}^{(\mathrm{p})}$ and $\mathbf{d}^{(\mathrm{c})}$,
given in \eqref{eq:d_p_proof_sum} and \eqref{eq:d_c_proof_sum} respectively, can be decomposed as:
\begin{align}
\label{eq:d_p_decomp_proof_sum}
\mathbf{d}^{(\mathrm{p})} &= \frac{1}{2} \mathbf{d}^{[1](\mathrm{p})} + \frac{1}{2} \mathbf{d}^{[2](\mathrm{p})}, \text{ where }
\mathbf{d}^{[m](\mathrm{p})}  = \big(a^{[m]}, \ldots, a^{[m]},\alpha_{i}^{[m]}, \ldots,\alpha_{K}^{[m]} \big), \ m \in \{1,2\} \\
\label{eq:d_c_decomp_proof_sum}
\mathbf{d}^{(\mathrm{c})} &= \frac{1}{2} \mathbf{d}^{[1](\mathrm{c})} + \frac{1}{2} \mathbf{d}^{[2](\mathrm{c})}, \text{ where }
\mathbf{d}^{[m](\mathrm{c})}  = \big( (1-a^{[m]})\lambda_{1},\ldots, (1-a^{[m]})\lambda_{K} \big), \ m \in
\{1,2\}.
\end{align}
Defining $\mathbf{d}^{[m]} \triangleq \mathbf{d}^{[m](\mathrm{p})}  + \mathbf{d}^{[m](\mathrm{c})}$, $m \in \{1,2\}$, it follows from
\eqref{eq:d_proof_sum}, \eqref{eq:d_p_decomp_proof_sum} and \eqref{eq:d_c_decomp_proof_sum} that
\begin{align}
\mathbf{d} \leq \frac{1}{2} \big( \mathbf{d}^{[1]} + \mathbf{d}^{[2]} \big).
\end{align}

At this point, it only remains to show that $\mathbf{d}^{[1]} \in \mathcal{D}^{[1]}$ and $\mathbf{d}^{[2]} \in \mathcal{D}^{[2]}$.
To this end, we observe that for any $m \in \{1,2\}$, we have $\alpha_{i-1}^{[m]} \geq a^{[m]} \geq \alpha_{i}^{[m]}$  and
$\alpha_{1}^{[m]} \geq  \alpha_{2}^{[m]}  \geq  \cdots \geq \alpha_{K}^{[m]}$, which hold due to
\eqref{eq:a_decomp_proof_sum} and the total order in \eqref{eq:total_CSIT_order}, respectively.
Therefore, $\mathbf{d}^{[m](\mathrm{p})}$  in \eqref{eq:d_p_decomp_proof_sum} is equal to
\begin{equation}
\label{eq:d_p_decomp_proof_sum_2}
\mathbf{d}^{[m](\mathrm{p})} = \big( \min\{a^{[m]},\alpha_{1}^{[m]}\}, \min\{a^{[m]},\alpha_{2}^{[m]}\}, \ldots,\min\{a^{[m]},\alpha_{K}^{[m]}\} \big).
\end{equation}
From $\mathbf{d}^{[m](\mathrm{c})}$ in \eqref{eq:d_c_decomp_proof_sum}, $\mathbf{d}^{[m](\mathrm{p})}$ in \eqref{eq:d_p_decomp_proof_sum_2} and the equivalent representation in \eqref{eq:DoF_region_RS_a}, it follows that the DoF tuple $\mathbf{d}^{[m]} = \mathbf{d}^{[m](\mathrm{p})}  + \mathbf{d}^{[m](\mathrm{c})}$ is in the achievable region $\mathcal{D}_{\mathrm{RS}}^{[m]}$.
Moreover, from Lemma \ref{lem:single_subchannel}, it follows that $\mathbf{d}^{[m]}$ is also in $\mathcal{D}^{[m]}$,
which completes the proof.
\begin{remark}
From the above, we can see that 
$\mathcal{D}_{\mathrm{sep}} = \mathcal{D}_{\mathrm{out}}$ holds under the total order condition in \eqref{eq:total_CSIT_order}.
This is equivalently expressed, after normalizing \eqref{eq:Minkowski_sum_equality} by $1/M$, as
\begin{equation}
\label{eq:Minkowski_sum}
\mathcal{D}^{[1]} \oplus \mathcal{D}^{[2]}  \oplus \cdots \oplus \mathcal{D}^{[M]} =
\Big\{ \mathbf{d} \in \mathbb{R}_{+}^{K} : \mathbf{d}(\mathcal{S}) \leq \sum_{m = 1}^{M} \Big[ 1 + \bm{\alpha}^{[m]}\big(\mathcal{S}
\setminus \{\min \mathcal{S}\} \big) \Big], \ \mathcal{S} \subseteq \mathcal{K} \Big\}.
\end{equation}
It follows that in this case, the linear inequalities that describe the polyhedron given by the Minkowski sum $\mathcal{D}^{[1]} \oplus \mathcal{D}^{[2]} \oplus \cdots \oplus \mathcal{D}^{[M]}$
are simply the direct sums of the corresponding linear inequalities that describe the $M$ constituent polyhedra.
This property is known to hold whenever the constituent polyhedra are polymatroids\footnote{This property was (essentially) used by Tse and Hanly to characterize the capacity region of the fading multiple-access channel (MAC) \cite{Tse1998}. More recently, it was exploited by Sun and Jafar while studying the separability of the parallel IC under treating interference as noise (TIN), from a GDoF perspective \cite{Sun2016}. Sun and Jafar gave examples for regions characterized by sum-GDoF inequalities, yet are non-polymatroidal and do not enjoy this property.} \cite[Th. 3]{McDiarmid1975}.
Interestingly, this direct summability property holds here despite the fact that
$\mathcal{D}^{[1]},\mathcal{D}^{[2]},\ldots,\mathcal{D}^{[M]}$
are not polymatroidal in general (see Appendix \ref{appendix:non_polymatroidality}).
This property translates to a linearity property for the class of polyhedra in Definition \ref{def:pol_region},
given by
\begin{equation}
\label{eq:Minkowski_sum_property}
\mathcal{P}\left(\frac{1}{M}\sum_{m\in \mathcal{M}} \bm{\alpha}^{[m]}\right) = \frac{1}{M} \bigoplus_{m \in \mathcal{M}} \mathcal{P}\left(\bm{\alpha}^{[m]}\right),
\end{equation}
which is shown to hold, through Theorem \ref{th:total_order}, whenever the monotonic order of entries is preserved across all parameter vectors $\bm{\alpha}^{[1]}, \bm{\alpha}^{[2]} , \ldots,\bm{\alpha}^{[M]}$.
\end{remark}
\section{Necessity of Total Order for Separability}
\label{sec:Necessity_of_Total_Order}
In this section, we prove the \emph{only if} part of Theorem \ref{th:total_order}, i.e. we show that
\begin{equation}
\label{eq:only_if_sep}
\mathbf{A} \notin \mathcal{A}_{\mathrm{to}} \implies
\mathcal{D}_{\mathrm{sep}} \subset
\mathcal{D}.
\end{equation}
The above is shown by explicitly characterizing a set of DoF tuples which are achievable, and hence in
$\mathcal{D}$, yet are not in $\mathcal{D}_{\mathrm{sep}}$, and hence cannot be achieved through separate coding over each subchannel, whenever $\mathbf{A} \notin \mathcal{A}_{\mathrm{to}}$.

First, for any pair of distinct users $k,j \in \mathcal{K}$, define
$\mathcal{D}_{\{k,j\}}$ as the set of DoF tuples that satisfy
\begin{subequations}
\label{eq:DoF_region_2_user_proj}
\begin{align}
d_{i} & = 0, \ \forall i \in \mathcal{K} \setminus\{k,j\}\\
0 \leq d_{k} & \leq 1 , \ 0 \leq d_{j} \leq 1  \\
d_{k} + d_{j} & \leq 1 + \min\{\alpha_{k} , \alpha_{j}\}.
\end{align}
\end{subequations}
It is evident that $\mathcal{D}_{\{k,j\}} \subseteq \mathcal{D}_{\mathrm{out}}$.
In fact, $\mathcal{D}_{\{k,j\}}$ is the projection of $\mathcal{D}_{\mathrm{out}}$ on the plane specified by
$\big\{\mathbf{d} : d_{i}  = 0, \ \forall i \in \mathcal{K} \setminus\{k,j\} \big\}$.
Most importantly, this subregion $\mathcal{D}_{\{k,j\}}$ is achievable.
\begin{lemma}
\label{lemma:DoF_region_2_user_proj}
For any pair of distinct users $k,j \in \mathcal{K}$, we have $\mathcal{D}_{\{k,j\}} \subseteq \mathcal{D}$.
\end{lemma}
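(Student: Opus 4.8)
The plan is to reduce the claim to the already-established two-user achievability result. Since every DoF tuple in $\mathcal{D}_{\{k,j\}}$ satisfies $d_i = 0$ for all $i \in \mathcal{K}\setminus\{k,j\}$, achieving such a tuple requires serving only users $k$ and $j$. I would therefore set the messages $W_i$ to be empty (zero rate) for every $i \in \mathcal{K}\setminus\{k,j\}$ and restrict attention to the sub-network comprising only users $k$ and $j$ over the $M$ parallel subchannels. Because the silenced users receive no intended signal and are not exploited for alignment, no additional interference constraints are introduced, and the power constraint \eqref{eq:power_constraint} specializes to a joint power constraint over the signals destined to $k$ and $j$.

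This sub-network is precisely a two-user, $M$-subchannel parallel MISO BC with a transmitter having $K \geq 2$ antennas and per-subchannel CSIT qualities $(\alpha_{k}^{[m]}, \alpha_{j}^{[m]})$, $m \in \mathcal{M}$, hence average qualities $\alpha_{k}$ and $\alpha_{j}$. By the two-user achievability discussed following Theorem \ref{th:outer_bound} (the region in \eqref{eq:DoF_region_2_users}), established through the joint-coding scheme of \cite{Hao2013}, the region
\begin{equation*}
\big\{(d_{k}, d_{j}) \in \mathbb{R}_{+}^{2} : d_{k} \leq 1, \ d_{j} \leq 1, \ d_{k} + d_{j} \leq 1 + \min\{\alpha_{k}, \alpha_{j}\} \big\}
\end{equation*}
is achievable over this sub-network. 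Embedding these tuples back into the $K$-user DoF region by appending zeros in the coordinates of the silenced users yields exactly $\mathcal{D}_{\{k,j\}}$, and therefore $\mathcal{D}_{\{k,j\}} \subseteq \mathcal{D}$.

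The point requiring care, rather than a genuine obstacle, is justifying the embedding of a scheme designed for a two-user (two-antenna) MISO BC into the $K$-antenna, $K$-user channel. I would verify that: (i) the zero-forcing component only needs to null interference at one of the two active users, which requires two transmit antennas and is thus feasible since $K \geq 2$, consistent with the $K_{\mathrm{t}} > K$ extension noted in the system-model footnote; (ii) silencing the users outside $\{k,j\}$ introduces no residual interference, as they are neither served nor used for alignment; and (iii) the per-channel-use per-subchannel DoF normalization of Section \ref{subsec:msgs-rates-capacity-DoF} matches that under which \eqref{eq:DoF_region_2_users} is stated, so that the averaged quantities $\alpha_{k}, \alpha_{j}$ appear directly. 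Once these routine checks are in place, the lemma follows immediately from the cited two-user result.
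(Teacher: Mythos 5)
Your proposal is correct and takes essentially the same approach as the paper's proof: mute all users in $\mathcal{K}\setminus\{k,j\}$ and apply the joint-coding (space-frequency rate-splitting) scheme of \cite{Hao2013} to the two remaining users, which achieves the two-user region in \eqref{eq:DoF_region_2_users} with qualities $\alpha_{k},\alpha_{j}$ and hence, after appending zeros, exactly $\mathcal{D}_{\{k,j\}}$. The feasibility checks you add (antenna count, absence of residual interference, matching DoF normalization) are routine and implicit in the paper's terser argument.
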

\begin{proof}
The above lemma is a direct consequence of the achievability argument in \cite{Hao2013}, which in turn, employs ideas from the rate-splitting scheme in \cite{Yang2013} and the $\mathrm{S}_{3}^{3/2}$ scheme in \cite{Tandon2013}.
In particular, by muting all users in $\mathcal{K} \setminus \{k,j\}$, and using the space-frequency rate-splitting transmission scheme
in \cite[Sec. V]{Hao2013} for users $k$ and $j$, the region $\mathcal{D}_{\{k,j\}}$  is achieved.
\end{proof}

Next, we observe that having a CSIT pattern violating the total order condition is equivalent to
the existence of a pair of users and a pair of subchannels such that one user is stronger, in the CSIT sense, than the other
over one subchannel, and weaker over the second subchannel.
\begin{lemma}
\label{lemma:total_order_equiv}
For any partial CSIT pattern $\mathbf{A}$, we have $\mathbf{A} \notin \mathcal{A}_{\mathrm{to}}$  if and only if
\begin{equation}
\label{eq:2_user_no_order}
\exists k,j \in \mathcal{K} \ \text{and} \ l,q \in \mathcal{M} \ \text{s.t.} \
\alpha_{k}^{[l]} > \alpha_{j}^{[l]} \ \text{and} \ \alpha_{k}^{[q]} < \alpha_{j}^{[q]}.
\end{equation}
\end{lemma}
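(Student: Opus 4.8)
The plan is to recast total order as a statement about the componentwise partial order on $\mathbb{R}^{M}$, and then read off \eqref{eq:2_user_no_order} as the negation of comparability for a single pair of users. For vectors $\mathbf{u},\mathbf{v}\in\mathbb{R}^{M}$, write $\mathbf{u}\geq\mathbf{v}$ for the element-wise order. The key reformulation I would establish first is that $\mathbf{A}\in\mathcal{A}_{\mathrm{to}}$ if and only if the $K$ user vectors $\bm{\alpha}_{1},\ldots,\bm{\alpha}_{K}$ are pairwise comparable, i.e. for every pair $k,j$ either $\bm{\alpha}_{k}\geq\bm{\alpha}_{j}$ or $\bm{\alpha}_{j}\geq\bm{\alpha}_{k}$. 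One direction is immediate: if a permutation $\pi$ arranges the vectors into a chain $\bm{\alpha}_{\pi(1)}\geq\cdots\geq\bm{\alpha}_{\pi(K)}$ as in Definition \ref{def:total order}, then transitivity of $\geq$ makes every pair comparable. For the converse, pairwise comparability means the finite set $\{\bm{\alpha}_{1},\ldots,\bm{\alpha}_{K}\}$ is a chain under the partial order $\geq$, hence can be listed in non-increasing order, yielding the required permutation.

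I would then handle the one delicate point, namely that $\mathcal{A}_{\mathrm{to}}$ is defined through the \emph{identity} ordering \eqref{eq:total_CSIT_order} under the average-CSIT convention \eqref{eq:average CSIT order}, rather than through an arbitrary permutation. This is settled by observing that if $k<j$ then $\alpha_{k}\geq\alpha_{j}$ by \eqref{eq:average CSIT order}, so the comparable pair must satisfy $\bm{\alpha}_{k}\geq\bm{\alpha}_{j}$: the alternative $\bm{\alpha}_{j}\geq\bm{\alpha}_{k}$ with $\bm{\alpha}_{j}\neq\bm{\alpha}_{k}$ would force the strict average inequality $\alpha_{j}>\alpha_{k}$, a contradiction. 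Hence pairwise comparability is realized by the identity permutation, placing $\mathbf{A}$ in $\mathcal{A}_{\mathrm{to}}$, and the reformulation holds as stated.

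With the reformulation in hand, the lemma follows by negation. The statement $\mathbf{A}\notin\mathcal{A}_{\mathrm{to}}$ is equivalent to the existence of a pair of users $k,j$ whose vectors $\bm{\alpha}_{k},\bm{\alpha}_{j}$ are incomparable. Incomparability means that $\bm{\alpha}_{k}\geq\bm{\alpha}_{j}$ fails and $\bm{\alpha}_{j}\geq\bm{\alpha}_{k}$ fails; the first failure produces a subchannel $q$ with $\alpha_{k}^{[q]}<\alpha_{j}^{[q]}$, and the second produces a subchannel $l$ with $\alpha_{k}^{[l]}>\alpha_{j}^{[l]}$. This is precisely condition \eqref{eq:2_user_no_order}, and since every implication in the chain is reversible, this gives the stated equivalence.

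I expect the main obstacle to be the bookkeeping around the definition of $\mathcal{A}_{\mathrm{to}}$ rather than any genuine mathematical difficulty: one must carefully distinguish the permutation-sense total order of Definition \ref{def:total order} from the fixed identity order \eqref{eq:total_CSIT_order} used to define $\mathcal{A}_{\mathrm{to}}$, and verify that the average-CSIT convention \eqref{eq:average CSIT order} reconciles the two. The remaining steps amount to an elementary translation between the language of the componentwise partial order and the coordinate-wise inequalities appearing in \eqref{eq:2_user_no_order}.
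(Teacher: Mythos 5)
Your proof is correct and takes essentially the same route as the paper's: both arguments reduce the lemma to the observation that the negation of \eqref{eq:2_user_no_order} is exactly pairwise comparability of the user CSIT vectors $\bm{\alpha}_{1},\ldots,\bm{\alpha}_{K}$ under the componentwise order, from which total order follows. The only difference is one of rigor: you make explicit the chain/sorting step and the reconciliation of the permutation-based Definition~\ref{def:total order} with the identity ordering \eqref{eq:total_CSIT_order} via the average-order convention \eqref{eq:average CSIT order}, a step the paper's proof compresses into ``since $l$, $k$ and $j$ are arbitrary, $\mathbf{A} \in \mathcal{A}_{\mathrm{to}}$ must hold.''
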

\begin{proof}
The \emph{if} part of the above statement follows directly from the definition of the total order condition.
To verify the \emph{only if} part, we show that whenever \eqref{eq:2_user_no_order} is violated, we must have $\mathbf{A} \in \mathcal{A}_{\mathrm{to}}$.

In particular, suppose that \eqref{eq:2_user_no_order} does not hold
and consider an arbitrary subchannel $l \in \mathcal{M}$.
Moreover, pick an arbitrary pair of distinct users $k,j \in \mathcal{K}$
such that $\alpha_{k}^{[l]}  > \alpha_{j}^{[l]}$.
If no such pair of users exists, then we must have $\alpha_{k}^{[l]} = \alpha_{j}^{[l]}$ for all $k,j \in \mathcal{K}$ over this subchannel $l$,
which in turn does not cause a violation of $\mathbf{A} \in \mathcal{A}_{\mathrm{to}}$.
Otherwise, having $\alpha_{k}^{[l]}  > \alpha_{j}^{[l]}$ over subchannel $l$ implies that $\alpha_{k}^{[m]}  \geq \alpha_{j}^{[m]}$ over all subchannel $m \in \mathcal{M}$, as we have assumed that \eqref{eq:2_user_no_order} does not hold.
Therefore, users $k$ and $j$ are totally ordered.
Since $l$, $k$ and $j$ are arbitrary, then $\mathbf{A} \in \mathcal{A}_{\mathrm{to}}$ must hold.
\end{proof}
Equipped with Lemma \ref{lemma:total_order_equiv}, we proceed by considering an arbitrary
CSIT pattern $\mathbf{A} \notin \mathcal{A}_{\mathrm{to}}$ and choosing users $k,j \in \mathcal{K}$ and subchannels $l,q \in \mathcal{M}$ for which \eqref{eq:2_user_no_order} holds.
We also assume, without loss of generality, that we have the following order of average CSIT qualities
\begin{equation}
\alpha_{k} \geq \alpha_{j}.
\end{equation}
From Lemma \ref{lemma:DoF_region_2_user_proj}, we know that the following set of DoF tuples is achievable
\begin{equation}
\label{eq:DoF_sum_2_user_proj}
\mathcal{D}_{\Sigma\{k,j\}} \triangleq \big\{ \mathbf{d} \in \mathbb{R}_{+}^{K}: d_{k} + d_{j} = 1 + \alpha_{j} \big\} \cap \mathcal{D}_{\{k,j\}}.
\end{equation}
In particular, $\mathcal{D}_{\Sigma\{k,j\}}$ is a nonempty set that consists of DoF tuples that maximize the sum-DoF
of users $k$ and $j$.
On the other hand, if we restrict our attention to the DoF tuples achieved through separate coding over subchannels, i.e.
$\mathbf{d} \in \mathcal{D}_{\mathrm{sep}}$, then the maximum sum-DoF achieved by users $k$ and $j$ is bounded above as follows
\begin{align}
\label{eq:DoF_sum_2_user_sep_1}
d_{k} + d_{j} & \leq 1 + \frac{1}{M} \sum_{m = 1}^{M} \min \big\{\alpha_{k}^{[m]}, \alpha_{j}^{[m]} \big\} \\
\label{eq:DoF_sum_2_user_sep_2}
& \leq   1 + \frac{1}{M} \left[ \min \big\{\alpha_{k}^{[l]}, \alpha_{j}^{[l]} \big\} + \min \big\{\alpha_{k}^{[q]}, \alpha_{j}^{[q]} \big\} \right] + \frac{1}{M} \sum_{m \in \mathcal{M}\setminus\{l,q\} } \alpha_{j}^{[m]}  \\
\label{eq:DoF_sum_2_user_sep_3}
& <  1 + \frac{1}{M} \sum_{m \in \mathcal{M} } \alpha_{j}^{[m]}\\
\label{eq:DoF_sum_2_user_sep_4}
& = 1 + \alpha_{j}
\end{align}
where \eqref{eq:DoF_sum_2_user_sep_3} follows from $\min \big\{\alpha_{k}^{[l]}, \alpha_{j}^{[l]} \big\} = \alpha_{j}^{[l]}$
and $ \min \big\{\alpha_{k}^{[q]}, \alpha_{j}^{[q]} \big\} = \alpha_{k}^{[q]} < \alpha_{j}^{[q]}$ (see \eqref{eq:2_user_no_order}).

From the above, it is evident that for users $k$ and $j$, the sum-DoF achievable through separate coding in \eqref{eq:DoF_sum_2_user_sep_1} is strictly less than the maximum achievable sum-DoF in \eqref{eq:DoF_sum_2_user_sep_4}.
Hence, whenever $\mathbf{A} \notin \mathcal{A}_{\mathrm{to}}$, there exists
a nonempty set of DoF tuples, i.e. $\mathcal{D}_{\Sigma\{k,j\}}$, such that
\begin{equation}
\mathcal{D}_{\Sigma\{k,j\}} \subseteq \mathcal{D} \ \text{and} \
\mathcal{D}_{\Sigma\{k,j\}} \nsubseteq \mathcal{D}_{\mathrm{sep}}.
\end{equation}
Therefore, \eqref{eq:only_if_sep} holds and separate coding cannot be optimal whenever total order is violated.
\section{Conclusion}
In this paper, we studied the DoF region of the multi-subchannel parallel MISO BC under partial CSIT.
To avoid the intractability of this problem in its generality, we take an alternative approach of identifying conditions
under which a simple separate coding strategy, where a single-subchannel-type scheme is employed in each subchannel, is sufficient to achieve the entire DoF region.
We showed that a total order condition on CSIT patterns is both
necessary and sufficient for separability from the entire DoF region perspective.
This condition, which can be made to hold in practical systems with feedback-based CSIT acquisition,
also leads to maximal DoF regions under per-user CSIT budget constraints.
The outer bound used in our proof is derived by extending the AIS approach, proposed by Davoodi and Jafar for the single-subchannel MISO BC under partial CSIT, to accommodate for multiple subchannels.
Moreover, as an auxiliary result used in showing the achievability side of our result,
we provided a new proof for the DoF region of the single-subchannel setting.
This new proof reduces the number of design variables required to achieve the single-subchannel DoF region compared to a previous proof by Piovano and Clerckx.
\appendix
\section{$\mathcal{P}(\bm{\beta})$ is Not Polymatroidal}
\label{appendix:non_polymatroidality} 
For the sake of completeness, we show here that the class of polyhedra in Definition \ref{def:pol_region} is non-polymatroidal in general,
and therefore the direct summability property of polymatroids in \cite[Th. 3]{McDiarmid1975} cannot be used directly in evaluating Minkowski sums of polyhedra in this class, e.g. in showing that the left-hand-side and the right-hand-side of \eqref{eq:Minkowski_sum} coincide.

First, we define the set function $f: 2^{\mathcal{K}} \rightarrow \mathbb{R}_{+}$ as follows
\begin{equation}
\label{eq:set_function}
f(\mathcal{S}) \triangleq
\begin{cases}
0,& \mathcal{S} = \emptyset \\
1 + \bm{\beta}(\mathcal{S}) - \max_{j \in \mathcal{S}} \beta_{j}, &  \mathcal{S} \subseteq \mathcal{K}, \mathcal{S} \neq  \emptyset
\end{cases}.
\end{equation}
By definition, $\mathcal{P}(\bm{\beta})$ is a polymatroid if the set function $f$ satisfies:
\begin{enumerate}
\item $f(\emptyset) = 0$ (normalized)
\item $f(\mathcal{S}) \leq f(\mathcal{T})$ if $\mathcal{S} \subseteq \mathcal{T}$  (nondecreasing)
\item $f(\mathcal{S}) + f(\mathcal{T}) \geq f(\mathcal{S} \cup \mathcal{T}) + f(\mathcal{S} \cap \mathcal{T})$  (submodular).
\end{enumerate}
The first two conditions are clearly satisfied by $f$ in \eqref{eq:set_function}.
Hence, we turn to showing that $f$ is not submodular in general.
Consider two non-empty subsets of $\mathcal{K}$ denoted by $\mathcal{S}$ and $\mathcal{T}$, and assume that $\mathcal{S} \cap \mathcal{T} \neq \emptyset$.
We denote the union $\mathcal{S} \cup \mathcal{T}$ and intersection $\mathcal{S} \cap \mathcal{T}$ by $\mathcal{U}$
and $\mathcal{I}$, respectively.
As $\mathcal{S}$, $\mathcal{T}$, $\mathcal{U}$ and $\mathcal{I}$ are all non-empty, the
submodularity condition in this case becomes
\begin{equation}
\label{eq:submod_0}
\bm{\beta}(\mathcal{U}) - \max_{u \in \mathcal{U}} \beta_{u} + \bm{\beta}(\mathcal{I}) - \max_{i \in \mathcal{I}} \beta_{i}
\leq \bm{\beta}(\mathcal{S}) - \max_{j \in \mathcal{S}} \beta_{j} + \bm{\beta}(\mathcal{T}) - \max_{k \in \mathcal{T}} \beta_{k}.
\end{equation}
However, in general, we have the following set of inequalities
\begin{align}
\label{eq:submod_1}
\bm{\beta}(\mathcal{U}) + \bm{\beta}(\mathcal{I}) - \max_{u \in \mathcal{U}} \beta_{u} - \max_{i \in \mathcal{I}} \beta_{i}
&  =  \bm{\beta}(\mathcal{S}) + \bm{\beta}(\mathcal{T}) - \max_{u \in \mathcal{U}} \beta_{u} - \max_{i \in \mathcal{I}} \beta_{i} \\
\label{eq:submod_2}
& \leq \bm{\beta}(\mathcal{S}) + \bm{\beta}(\mathcal{T}) - \max_{j \in \mathcal{S}} \beta_{j} - \max_{i \in \mathcal{I}} \beta_{i} \\
\label{eq:submod_3}
& \nleq \bm{\beta}(\mathcal{S}) + \bm{\beta}(\mathcal{T}) - \max_{j \in \mathcal{S}} \beta_{j} - \max_{k \in \mathcal{T}} \beta_{k}
\end{align}
where \eqref{eq:submod_1} holds as we have $\bm{\beta}(\mathcal{U}) = \bm{\beta}(\mathcal{S}) + \bm{\beta}(\mathcal{T}) - \bm{\beta}(\mathcal{I})$
in this case, while \eqref{eq:submod_2} holds due to $\max_{j \in \mathcal{S}} \beta_{j} \leq \max_{u \in \mathcal{U}} \beta_{u} $.
For submodularity to hold, the inequality in \eqref{eq:submod_3} must hold as well, which
is not always the case as we have $\max_{k \in \mathcal{T}} \beta_{k}  \geq \max_{i \in \mathcal{I}} \beta_{i} $.

Guided by the above observations, we construct a simple example that violates submodularity.
Take $(\beta_{1},\beta_{2},\beta_{3}) = (1,0.5,0.8)$,
and consider $\mathcal{S} = \{1,2\}$ and $\mathcal{T} = \{2,3\}$.
For this example, the left-hand-side of \eqref{eq:submod_0} is equal to $0.5+0.8$,
while the right-hand-side is equal to $0.5+0.5$.
As \eqref{eq:submod_0} does not hold, $f$ is not submodular in general, and hence $\mathcal{P}(\bm{\beta})$
is not always a polymatroid.
\section{Proof of Lemma \ref{lemma:diff_entropies}}
\label{appendix:proof_lemma_diff_entropies}
Here we present a proof of the bound in \eqref{eq:delta_k upperbound}, which builds upon and extends the AIS approach in \cite{Davoodi2016}.
We assume that $W_{\langle k:K \rangle}$ are fixed as constants in $H_{k}^{\Delta}$, and therefore we suppress them in the following. Note that this has no influence on the outer bound argument.
\subsection{Functional Dependence}
Given the channel realizations associated with user $k-1$, i.e. $\mathcal{G}_{k-1}$,
there are multiple codewords $(\bar{X}_{1},\ldots,\bar{X}_{k})$ that cast the same image in $\bar{Y}_{k-1}$ in general.
Therefore, the mapping from the received signal
$\bar{Y}_{k-1}$ to one of the codewords $(\bar{X}_{1},\ldots,\bar{X}_{k})$ is random.
This mapping is given by $(\bar{X}_{1},\ldots,\bar{X}_{k}) = \mathcal{L}\bigl(\bar{Y}_{k-1}, \mathcal{G}_{k-1}\bigr)$.
Next, we bound the difference of entropies $H_{k}^{\Delta}$ above as
\begin{align}
\nonumber
H_{k}^{\Delta} & \leq H\left( \bar{Y}_{k-1} \mid \mathcal{G}\right) -
H\left( \bar{Y}_{k} \mid \mathcal{G},\mathcal{L}\right) \\
\label{eq:H_diff_upper_bound}
& \leq H\left( \bar{Y}_{k-1} \mid \mathcal{G}\right) -
H\left( \bar{Y}_{k} \mid \mathcal{G},\mathcal{L} = \mathcal{L}_{0} \right)
\end{align}
where $\mathcal{L}_{0}$ is a mapping which minimizes the term
$H\left( \bar{Y}_{k} \mid \mathcal{G},\mathcal{L}\right) $ over the support of $\mathcal{L}$.
In what follows, we fix a deterministic mapping as
\begin{equation}
\label{eq:fun_dependence}
(\bar{X}_{1},\ldots,\bar{X}_{k}) = \mathcal{L}_{0}\bigl(\bar{Y}_{k-1}, \mathcal{G}_{k-1}\bigr)
\end{equation}
which does not influence the term $H(\bar{Y}_{k-1} \mid \mathcal{G})$ and provides an upper bound for $H_{k}^{\Delta}$, as seen in \eqref{eq:H_diff_upper_bound}.
Therefore, we proceed while assuming that $\bar{Y}_{k}$ is a function of $\bar{Y}_{k-1}$ and $\mathcal{G}$, i.e.
$\bar{Y}_{k}\bigl(\bar{Y}_{k-1}, \mathcal{G}\bigr)$.
\subsection{Aligned Image Sets}
For a given channel realization $\mathcal{G}$, the aligned image set is defined as the set of all distinct
signals at user $k-1$ that cast the same image, e.g. $\bar{Y}_{k} \left( \nu, \mathcal{G}  \right)$, at user $k$.
This is expressed as:
\begin{equation}
\label{eq:aligned_image_set}
\mathcal{A}_{\nu} \left(\mathcal{G} \right) \triangleq
\Bigl\{  \bar{y}_{k-1}  \in \bigl\{\bar{Y}_{k-1} \bigr\} : \bar{Y}_{k}  \left( \bar{y}_{k-1}, \mathcal{G} \right) =
\bar{Y}_{k} \left( \nu, \mathcal{G}  \right)  \Bigr\}
\end{equation}
where $\bigl\{\bar{Y}_{k-1} \bigr\}$ denotes the support of $\bar{Y}_{k-1}$.
Following the exact same steps in \cite[Sec. VI.5]{Davoodi2016},
$H_{k}^{\Delta}$ is bounded in terms of the average size of the aligned image sets as
\begin{equation}
\label{eq:H_delta upperbound}
H_{k}^{\Delta} \leq \log \Bigl( \E \bigl[ | \mathcal{A}_{\bar{Y}_{k-1}} \left(\mathcal{G} \right) | \bigr] \Bigr)
\end{equation}
which is made possible due to the functional dependence assumption in \eqref{eq:fun_dependence}.
The problem now becomes to bound the expected cardinality of $\mathcal{A}_{\bar{Y}_{k-1}} \left(\mathcal{G} \right)$, where the
expectation is over $\bar{Y}_{k-1}$ and $\mathcal{G} $.
Note that for a given realization $\bar{Y}_{k-1} = \nu$, we have
\begin{equation}
\label{eq:average_AIS}
\E \bigl[ | \mathcal{A}_{\nu} \left(\mathcal{G} \right) | \bigr] =
\sum_{\lambda \in  \{\bar{Y}_{k-1}\} } \Prob\bigl( \lambda \in \mathcal{A}_{\nu} \left(\mathcal{G} \right) \bigr).
\end{equation}
Next, we bound the probabilities in \eqref{eq:average_AIS}.
\subsection{Probability of Image Alignment}
To facilitate this step, we recall that from the non-degenerate channel model described in Section \ref{subsec:partial_CSIT},
there exists a constant $\Delta $ such that for any
$G_{ki}^{[m]}(t) \in \mathcal{G}$, we have
\begin{equation}
\label{eq:channel_abs_Delta}
0 < \Delta^{-1} \leq |G_{ki}^{[m]}(t)| \leq \Delta  < \infty
\end{equation}
Moreover, the bounded density assumption implies that the peak of the probability density function
of $G_{ki}^{[m]}(t) \in \mathcal{G}_{k}$, conditioned on CSIT, behaves as $f_{\max} \bar{P}^{\alpha_{k}^{[m]}}$.
For notational convenience, we introduce $G_{kk}^{[m]}(t) = 1$, $k \in \mathcal{K}$, $m \in \mathcal{M}$ and $t \in \langle n \rangle$,
to the channel model in \eqref{eq:received signal_det}.

Given $\mathcal{G}_{k-1}$, consider two distinct realizations of $\bar{Y}_{k-1}$, denoted by $\lambda$ and $\nu$, which are produced by the two corresponding realizations of $(\bar{X}_{1},\ldots,\bar{X}_{k})$ denoted by
$(\bar{\lambda}_{1},\ldots,\bar{\lambda}_{k})$ and $(\bar{\nu}_{1},\ldots,\bar{\nu}_{k})$ respectively, such that
$(\bar{\lambda}_{1},\ldots,\bar{\lambda}_{k}) = \mathcal{L}_{0}\bigl(\lambda, \mathcal{G}_{k-1}\bigr)$ and
$(\bar{\nu}_{1},\ldots,\bar{\nu}_{k}) = \mathcal{L}_{0}\bigl(\nu, \mathcal{G}_{k-1}\bigr)$.
We wish to bound the probability of the event that the images of  $(\bar{\lambda}_{1},\ldots,\bar{\lambda}_{k})$ and
$(\bar{\nu}_{1},\ldots,\bar{\nu}_{k}) $ align at user  $k$, i.e. $\lambda \in \mathcal{A}_{\nu} \left(\mathcal{G} \right) $.
For such alignment event, we must have
\begin{equation}
\label{eq:aligned_images_0}
\sum_{i = 1}^{k} \bigl\lfloor G_{ki}^{[m]}(t) \bar{\lambda}_{i}^{[m]}(t) \bigr\rfloor
=
\sum_{i = 1}^{k} \bigl\lfloor G_{ki}^{[m]}(t) \bar{\nu}_{i}^{[m]}(t) \bigr\rfloor, \ t \in \langle n \rangle, m \in \mathcal{M}.
\end{equation}
It can be easily checked that the event in \eqref{eq:aligned_images_0} implies the following event:
\begin{equation}
\label{eq:aligned_images}
\left| \sum_{i = 1}^{k} G_{ki}^{[m]}(t) \left( \bar{\lambda}_{i}^{[m]}(t) - \bar{\nu}_{i}^{[m]}(t) \right)  \right| \leq k
, \ t \in \langle n \rangle, m \in \mathcal{M}.
\end{equation}
For the purpose of bounding above the probability of alignment,
it is sufficient to consider \eqref{eq:aligned_images}.

Consider a given channel use $t$ and subchannel $m$ and in \eqref{eq:aligned_images}.
Moreover, consider a transmit antenna $ i \in \langle k - 1\rangle$ and recall that $G_{kk}^{[m]}(t) = 1$ is fixed.
By fixing the values of $G_{kj}^{[m]}(t)$, $j \in \langle k - 1\rangle \setminus \{i\}$,
the random variables $G_{ki}^{[m]}(t) \big( \bar{\lambda}_{i}^{[m]}(t) - \bar{\nu}_{i}^{[m]}(t) \big)$ must take values in an interval of length no more than $\frac{2k}{\bar{ | \lambda}_{i}^{[m]}(t) - \bar{\nu}_{i}^{[m]}(t) |} \leq
\frac{2K}{\bar{ | \lambda}_{i}^{[m]}(t) - \bar{\nu}_{i}^{[m]}(t) |}$ so that \eqref{eq:aligned_images} holds.
From the bounded density assumption, the probability of such event is bounded above by
$\frac{2Kf_{\max} \bar{P}^{\alpha_{k}^{[m]}}}{\bar{ | \lambda}_{i}^{[m]}(t) - \bar{\nu}_{i}^{[m]}(t) |}$.
Note that this bound holds for any $i \in \langle k - 1 \rangle$.
Hence, by choosing the tightest of such bounds, the probability of alignment for the channel use $t$ and subchannel $m$, denoted by
$\Prob^{[m]}(t)$, is bounded above by
\begin{equation}
\label{eq:prob_bound_t_m}
\Prob^{[m]}(t) \leq  \frac{2Kf_{\max} \bar{P}^{\alpha_{k}^{[m]}}}{\max_{i \in \langle k-1 \rangle }  | \bar{\lambda}_{i}^{[m]}(t) - \bar{\nu}_{i}^{[m]}(t) |}
\end{equation}
Next, we wish to express the bound in  \eqref{eq:prob_bound_t_m} in terms of the realizations of of $\bar{Y}_{k-1}^{[m]}(t)$, i.e. $\lambda^{[m]}(t)$ and $\nu^{[m]}(t)$.
For this purpose, we bound $|\bar{\lambda}^{[m]}(t) - \bar{\nu}^{[m]}(t) |$ above as follows:
\begin{align}
\nonumber
|\bar{\lambda}^{[m]}(t) - \bar{\nu}^{[m]}(t) | & =
\left| \sum_{i = 1}^{k-1} \bigl\lfloor G_{(k-1)i}^{[m]}(t) \bar{\lambda}_{i}^{[m]}(t) \bigr\rfloor -
\sum_{i = 1}^{k} \bigl\lfloor G_{(k-1)i}^{[m]}(t) \bar{\nu}_{i}^{[m]}(t) \bigr\rfloor \right| \\
\nonumber
& \leq (k-1) + \sum_{i = 1}^{k-1}  \left| G_{(k-1)i}^{[m]}(t) \big( \bar{\lambda}_{i}^{[m]}(t)  -  \bar{\nu}_{i}^{[m]}(t) \big) \right|   \\
\label{eq:prob_bound_t_m_2}
& \leq K + K \Delta \max_{i \in \langle k-1 \rangle }  | \bar{\lambda}_{i}^{[m]}(t) - \bar{\nu}_{i}^{[m]}(t) |
\end{align}
where $\Delta$ is the constant in \eqref{eq:channel_abs_Delta}.
By plugging the bound in \eqref{eq:prob_bound_t_m_2} back into \eqref{eq:prob_bound_t_m}, we obtain
\begin{equation}
\label{eq:prob_bound_t_m_3}
\Prob^{[m]}(t)  \leq
\begin{cases}
       \frac{2K^{2} \Delta  f_{\max} \bar{P}^{\alpha_{k}^{[m]}}}{ | \lambda^{[m]}(t) - \nu^{[m]}(t)| - K },
        \ | \lambda^{[m]}(t) - \nu^{[m]}(t)| > K \\
       1, \ \text{otherwise}
\end{cases}
\end{equation}
where we have used the bound $\Prob^{[m]}(t) \leq 1$ to exclude cases where $| \lambda^{[m]}(t) - \nu^{[m]}(t)| \leq K$ in \eqref{eq:prob_bound_t_m_3}.

Now consider the case of all channel uses $t \in \langle n \rangle$ and subchannels $m \in \mathcal{M}$,
the same approach used above for given $t$ and $m$ is employed to bound the probability of alignment as
\begin{align}
\nonumber
\Prob\bigl( \lambda \in \mathcal{A}_{\nu} \left(\mathcal{G} \right) \bigr)
\leq \ &
\prod_{m = 1}^{M} \prod_{t:| \lambda^{[m]}(t) - \nu^{[m]}(t)| > K} \frac{2K^{2} \Delta  f_{\max} \bar{P}^{\alpha_{k}^{[m]}}}{| \lambda^{[m]}(t) - \nu^{[m]}(t)| - K}  \times
\prod_{m = 1}^{M} \prod_{t: | \lambda^{[m]}(t) - \nu^{[m]}(t)| \leq K} 1
\\
\nonumber
\leq \ &  \Bigl( \max\bigl\{ 2K^{2}  \Delta  f_{\max}, 1 \bigr\}  \Bigr)^{nM} \bar{P}^{n\sum_{m=1}^{M}\alpha_{k}^{[m]}} \times \\
\label{eq:probability_image_alignment}
& \prod_{m = 1}^{M} \left[ \prod_{t: | \lambda^{[m]}(t) - \nu^{[m]}(t)| > K} \frac{1}{ | \lambda^{[m]}(t) - \nu^{[m]}(t)| - K }
   \times \prod_{t: | \lambda^{[m]}(t) - \nu^{[m]}(t)| \leq K} 1 \right].
\end{align}
Note that in \eqref{eq:probability_image_alignment}, we have assumed, without loss of generality, that
$\bar{P} \geq 1$.
Moreover, we have left products which are equal to $1$ explicit to facilitate the following step.
\subsection{Bounding the Average Size of Aligned Image Sets and Combining Bounds}
Equipped with the bound on the probability of alignment in \eqref{eq:probability_image_alignment},
we now proceed to bound $\E \bigl[ | \mathcal{A}_{\nu} \left(\mathcal{G} \right) | \bigr]$.
From \eqref{eq:average_AIS} and \eqref{eq:probability_image_alignment}, we obtain
\begin{align}
\nonumber
\E \bigl[ | \mathcal{A}_{\nu} \left(\mathcal{G} \right) | \bigr]
& =
\sum_{\lambda \in  \{\bar{Y}_{k-1}\} } \Prob\bigl( \lambda \in \mathcal{A}_{\nu} \left(\mathcal{G} \right) \bigr) \\
\nonumber
& \leq   \Bigl( \max\bigl\{ 2K^{2} \Delta  f_{\max},1 \bigr\}  \Bigr)^{nM}  \bar{P}^{n\sum_{m=1}^{M}\alpha_{k}^{[m]}} \times \\
\label{eq:average_AIS_bound_0}
 \prod_{m=1}^{M} \prod_{t=1}^{n} & \left[
 \sum_{\lambda^{[m]}(t): | \lambda^{[m]}(t) - \nu^{[m]}(t)|  \leq  K} 1 +
\sum_{\lambda^{[m]}(t): K < | \lambda^{[m]}(t) - \nu^{[m]}(t)|  \leq  Q_{y}}  \frac{1}{ | \lambda^{[m]}(t) - \nu^{[m]}(t)|  - K} \right] \\
\label{eq:average_AIS_bound}
& \leq  \Bigl( \max\bigl\{ 2K^{2} \Delta  f_{\max},1 \bigr\}  \Bigr)^{nM} \bar{P}^{n\sum_{m=1}^{M}\alpha_{k}^{[m]}}
\times \Bigl( \log(\bar{P}) + o\bigl( \log(\bar{P}) \bigr) \Bigr)^{nM}
\end{align}
where $Q_{y} \triangleq K + K \Delta \lceil \bar{P} \rceil $, which is an upper bound on the values taken by
$| \lambda^{[m]}(t) - \nu^{[m]}(t)| $.
The expression in \eqref{eq:average_AIS_bound_0} is obtained by an interchange of sums and products (see \cite[Footnote 3]{Davoodi2017a}),
while \eqref{eq:average_AIS_bound} is obtained using the partial sum of harmonic series, i.e. $\sum_{i = 1}^{a} \frac{1}{i}\leq 1 + \log(a)$.
\subsection{Combining Bounds}
The bound for $\E \bigl[ | \mathcal{A}_{\nu} \left(\mathcal{G} \right) | \bigr] $ in \eqref{eq:average_AIS_bound} holds for all
$\nu \in \bigl\{\bar{Y}_{k-1} \bigr\}$.
Combining this with \eqref{eq:H_delta upperbound}, we obtain the desired bound for the difference of entropies, given by
\begin{equation}
\limsup_{P \rightarrow \infty} \limsup_{n \rightarrow \infty}
 \frac{H_{k}^{\Delta}}{nM\log(\bar{P})} \leq \frac{1}{M}  \sum_{m=1}^{M}\alpha_{k}^{[m]}.
\end{equation}
\section{Proof of Lemma \ref{lem:single_subchannel}}
\label{appendix:proof_lemma_single_subchannel}
Here we prove that the achievable DoF region $\mathcal{D}_{\mathrm{RS}}^{[m]}$, described in \eqref{eq:DoF_region_RS_a},
is equivalent to the DoF region $\mathcal{D}^{[m]}$, given in \eqref{eq:DoF_region_single_subchannel_order}.
As we deal with only a single subchannel throughout this appendix, we drop the superscript $[m]$ for brevity.

In the first step of our proof, we observe  that
the private DoF variables in \eqref{eq:DoF_region_RS_a} can be easily eliminated by
replacing each variable $d_{i}^{(\mathrm{p})}$ with $d_{i} - d_{i}^{(\mathrm{c})}$, for all $i \in \mathcal{K}$.
After this elimination, the set of inequalities in \eqref{eq:DoF_region_RS_a} are equivalently expressed as
\begin{subequations}
  \label{eq:DoF_RS_region_a}
\begin{align}
  &  d_{i}^{(\mathrm{c})} - d_{i}  \leq 0, \ i \in \mathcal{K} \\
  & - d_{i}^{(\mathrm{c})}  \leq 0, \ i \in \mathcal{K} \\
  \label{eq:DoF_RS_region_a_p1}
  &d_{i} - d_{i}^{(\mathrm{c})}  \leq \alpha_{i}, \ i \in \mathcal{K} \\
  \label{eq:DoF_RS_region_a_p2}
  &d_{i} - d_{i}^{(\mathrm{c})}  \leq a, \ i \in \mathcal{K} \\
  \label{eq:DoF_RS_region_a_1}
  &\sum_{i \in \mathcal{K}} d_{i}^{(\mathrm{c})} \leq 1 - a \\
  &0 \leq a \leq  1.
\end{align}
\end{subequations}
In the above, \eqref{eq:DoF_RS_region_a_p1} and \eqref{eq:DoF_RS_region_a_p2} are equivalent to \eqref{eq:DoF_region_RS_a_3}.
Next, we replace the inequality in \eqref{eq:DoF_RS_region_a_1} with the equality
$\sum_{i \in \mathcal{K}} d_{i}^{(\mathrm{c})} = 1 - a $, which in principle yields an achievable
DoF region contained in $\mathcal{D}_{\mathrm{RS}}$, yet turns out to have no influence on our proof.
As a result, we may now eliminate the power allocation variable $a$ in \eqref{eq:DoF_RS_region_a} by replacing it with
$1 - \sum_{i \in \mathcal{K}} d_{i}^{(\mathrm{c})} = 1 - \mathbf{d}^{(\mathrm{c})}(\mathcal{K})$.
This leaves us with the following set of inequalities
\begin{subequations}
\label{eq:D_m_0}
\begin{align}
  d_{i} - d_{i}^{(\mathrm{c})}  & \leq \alpha_{i}, \ i \in \mathcal{K} \\
   - d_{i}^{(\mathrm{c})}       & \leq 0, \ i \in \mathcal{K} \\
   d_{i}^{(\mathrm{c})} - d_{i}  & \leq 0, \ i \in \mathcal{K} \\
  d_{i} + \mathbf{d}^{(\mathrm{c})}\big( \mathcal{K}\setminus \{i\} \big)  & \leq 1, \ i \in \mathcal{K} \\
  \mathbf{d}^{(\mathrm{c})}( \mathcal{K})  & \leq 1.
\end{align}
\end{subequations}
In what follows, we focus on the set of inequalities in \eqref{eq:D_m_0} and eliminate the common DoF
variables $\mathbf{d}^{(\mathrm{c})}$ using Fourier-Motzkin (FM) elimination \cite[Appendix D]{ElGamal2011}.

The proposed FM elimination procedure comprises $K$ steps, where in each step $k \in \mathcal{K}$, we eliminate the common DoF variable
$d^{(\mathrm{c})}_{k}$.
We further complement the elimination procedure with mathematical induction so that it applies to any arbitrary number of users $K$.
To gain insight into the induction hypothesis, we start by manually carrying out the first two steps of the elimination.
\subsection{FM Elimination: Step $1$}
To eliminate $d_{1}^{(\mathrm{c})}$, we first group the set inequalities in \eqref{eq:D_m_0} into the three following categories depending on the presence and sign of $d_{1}^{(\mathrm{c})}$ on the left-hand-side of the inequalities.
\begin{itemize}
\item Inequalities without $d_{1}^{(\mathrm{c})}$:
\begin{subequations}
\label{eq:D_m_1_no}
\begin{align}
  d_{i} - d_{i}^{(\mathrm{c})} &\leq  \alpha_{i}, \ i \in \langle 2:K \rangle \\
   - d_{i}^{(\mathrm{c})}      & \leq  0, \ i \in \langle 2:K \rangle \\
    d_{i}^{(\mathrm{c})} - d_{i}  &  \leq 0, \ i \in \langle 2: K \rangle \\
\label{eq:D_m_1_no_3}
  d_{1} + \mathbf{d}^{(\mathrm{c})}\big( \mathcal{K}\setminus \{1\} \big) & \leq 1.
\end{align}
\end{subequations}
\item Inequalities with $-d_{1}^{(\mathrm{c})}$:
\begin{subequations}
\label{eq:D_m_1_N}
\begin{align}
  d_{1} - d_{1}^{(\mathrm{c})} &\leq  \alpha_{1} \\
   - d_{1}^{(\mathrm{c})}      & \leq  0.
\end{align}
\end{subequations}
\item Inequalities with $+d_{1}^{(\mathrm{c})}$:
\begin{subequations}
\label{eq:D_m_1_P}
\begin{align}
   d_{1}^{(\mathrm{c})} - d_{1}  & \leq 0 \\
  d_{i} + d_{1}^{(\mathrm{c})} + \mathbf{d}^{(\mathrm{c})}\big( \mathcal{K}\setminus \{1,i\} \big) &\leq  1, \ i \in \langle 2:K \rangle \\
   d_{1}^{(\mathrm{c})} + \mathbf{d}^{(\mathrm{c})}\big( \mathcal{K}\setminus \{1\} \big)     & \leq  1.
\end{align}
\end{subequations}
\end{itemize}
Next, we eliminate the variable $d_{1}^{(\mathrm{c})}$ by adding each inequality in \eqref{eq:D_m_1_N} to every inequality in \eqref{eq:D_m_1_P}
(see, e.g., \cite[Appendix D]{ElGamal2011}).
This procedure yields the following set of inequalities:
\begin{subequations}
\label{eq:D_m_1_N_P}
\begin{align}
  \label{eq:D_m_1_N_P_0}
     0 &  \leq \alpha_{1} \\
     - d_{1}  &  \leq 0 \\
d_{1} + d_{i} + \mathbf{d}^{(\mathrm{c})}\big( \mathcal{K}\setminus \{1,i\} \big) &\leq  1 + \alpha_{1}, \ i \in \langle 2:K \rangle \\
d_{i} + \mathbf{d}^{(\mathrm{c})}\big( \mathcal{K}\setminus \{1,i\} \big) &\leq  1 , \ i \in \langle 2:K \rangle \\
\label{eq:D_m_1_N_P_3}
d_{1} + \mathbf{d}^{(\mathrm{c})}\big( \mathcal{K}\setminus \{1\} \big)     & \leq  1 + \alpha_{1} \\
\label{eq:D_m_1_N_P_4}
\mathbf{d}^{(\mathrm{c})}\big( \mathcal{K}\setminus \{1\} \big)     & \leq  1.
\end{align}
\end{subequations}
At this point, we are left with the inequalities in  \eqref{eq:D_m_1_no} and \eqref{eq:D_m_1_N_P}, where
$d_{1}^{(\mathrm{c})}$  has been eliminated.
Inequalities of the type in \eqref{eq:D_m_1_N_P_0} are clearly not useful, and hence we omit them in the following steps.
We observe that \eqref{eq:D_m_1_N_P_3} is redundant as it is implied by \eqref{eq:D_m_1_no_3}.
Moreover, since $d_{1} \geq 0$, the inequality in \eqref{eq:D_m_1_N_P_4} is redundant as it
is also implied by \eqref{eq:D_m_1_no_3}. It follows that at the end of step $1$ (and at the beginning of step $2$), we have
the following set of inequalities
\begin{subequations}
\label{eq:D_m_2}
\begin{align}
- d_{1}  & \leq 0 \\
  d_{i} - d_{i}^{(\mathrm{c})} &\leq  \alpha_{i}, \ i \in \langle 2:K \rangle \\
   - d_{i}^{(\mathrm{c})}      & \leq  0, \ i \in \langle 2:K \rangle \\
 d_{i}^{(\mathrm{c})} - d_{i}  &  \leq 0, \ i \in \langle 2: K \rangle \\
d_{1} + d_{i} + \mathbf{d}^{(\mathrm{c})}\big( \mathcal{K}\setminus \{1,i\} \big) &\leq  1 + \alpha_{1}, \ i \in \langle 2:K \rangle \\
d_{i} + \mathbf{d}^{(\mathrm{c})}\big( \mathcal{K}\setminus \{1,i\} \big) &\leq  1 , \ i \in \langle 2:K \rangle \\
d_{1} + \mathbf{d}^{(\mathrm{c})}\big( \mathcal{K}\setminus \{1\} \big) & \leq 1.
\end{align}
\end{subequations}
\subsection{{FM Elimination: Step $2$}}
For the purpose of eliminating the variable $d_{2}^{(\mathrm{c})}$, we categorize the inequalities in \eqref{eq:D_m_2} as follows:
\begin{itemize}
\item Inequalities without $d_{2}^{(\mathrm{c})}$:
\begin{subequations}
\label{eq:D_m_2_no}
\begin{align}
 - d_{1}  &  \leq 0 \\
  d_{i} - d_{i}^{(\mathrm{c})} &\leq  \alpha_{i}, \ i \in \langle 3:K \rangle \\
   - d_{i}^{(\mathrm{c})}      & \leq  0, \ i \in \langle 3:K \rangle \\
 d_{i}^{(\mathrm{c})} - d_{i}  &  \leq 0, \ i \in \langle 3: K \rangle \\
\label{eq:D_m_2_no_3}
d_{1} + d_{2} + \mathbf{d}^{(\mathrm{c})}\big( \mathcal{K}\setminus \{1,2\} \big) & \leq 1 + \alpha_{1} \\
d_{2} + \mathbf{d}^{(\mathrm{c})}\big( \mathcal{K}\setminus \{1,2\} \big) & \leq 1.
\end{align}
\end{subequations}
\item Inequalities with $-d_{2}^{(\mathrm{c})}$:
\begin{subequations}
\label{eq:D_m_2_N}
\begin{align}
  d_{2} - d_{2}^{(\mathrm{c})} &\leq  \alpha_{2} \\
   - d_{2}^{(\mathrm{c})}      & \leq  0.
\end{align}
\end{subequations}
\item Inequalities with $+d_{2}^{(\mathrm{c})}$:
\begin{subequations}
\label{eq:D_m_2_P}
\begin{align}
d_{2}^{(\mathrm{c})} - d_{2}  &  \leq 0\\
  d_{1} + d_{i} + d_{2}^{(\mathrm{c})} + \mathbf{d}^{(\mathrm{c})}\big( \mathcal{K}\setminus \{1,2,i\} \big) &\leq  1 + \alpha_{1}
  , \ i \in \langle 3:K \rangle \\
  d_{i} + d_{2}^{(\mathrm{c})} + \mathbf{d}^{(\mathrm{c})}\big( \mathcal{K}\setminus \{1,2,i\} \big) &\leq  1, \ i \in \langle 3:K \rangle \\
  d_{1} + d_{2}^{(\mathrm{c})} + \mathbf{d}^{(\mathrm{c})}\big( \mathcal{K}\setminus \{1,2\} \big)     & \leq  1.
\end{align}
\end{subequations}
\end{itemize}
Now we eliminate $d_{2}^{(\mathrm{c})}$ by adding the inequalities in \eqref{eq:D_m_2_N} and \eqref{eq:D_m_2_P}.
This yields:
\begin{subequations}
\label{eq:D_m_2_N_P}
\begin{align}
 - d_{2}  &  \leq 0\\
  d_{1} + d_{2} + d_{i}  + \mathbf{d}^{(\mathrm{c})}\big( \mathcal{K}\setminus \{1,2,i\} \big) &\leq  1 + \alpha_{1} + \alpha_{2}
  , \ i \in \langle 3:K \rangle \\
    d_{1} + d_{i} + \mathbf{d}^{(\mathrm{c})}\big( \mathcal{K}\setminus \{1,2,i\} \big) &\leq  1 + \alpha_{1}
  , \ i \in \langle 3:K \rangle \\
  d_{2} + d_{i}  + \mathbf{d}^{(\mathrm{c})}\big( \mathcal{K}\setminus \{1,2,i\} \big) &\leq  1 + \alpha_{2}, \ i \in \langle 3:K \rangle \\
  \label{eq:D_m_2_N_P_4}
  d_{i}  + \mathbf{d}^{(\mathrm{c})}\big( \mathcal{K}\setminus \{1,2,i\} \big) &\leq  1, \ i \in \langle 3:K \rangle \\
  \label{eq:D_m_2_N_P_5}
  d_{1} + d_{2} + \mathbf{d}^{(\mathrm{c})}\big( \mathcal{K}\setminus \{1,2\} \big)     & \leq  1 + \alpha_{2} \\
  d_{1}  + \mathbf{d}^{(\mathrm{c})}\big( \mathcal{K}\setminus \{1,2\} \big)     & \leq  1.
\end{align}
\end{subequations}
After eliminating $d_{2}^{(\mathrm{c})}$, we are left with the inequalities in \eqref{eq:D_m_2_no} and
\eqref{eq:D_m_2_N_P}.
Moreover, it can be seen that the inequality in \eqref{eq:D_m_2_no_3} is now redundant as it is implied by the inequality in
\eqref{eq:D_m_2_N_P_5}.
The remaining relevant inequalities in \eqref{eq:D_m_2_no} and
\eqref{eq:D_m_2_N_P} are expressed in compact form as follows:
\begin{subequations}
\label{eq:D_m_3}
\begin{align}
  - d_{i}  & \leq 0, \ i \in \{1,2\}\\
  d_{i} - d_{i}^{(\mathrm{c})} &\leq  \alpha_{i}, \ i \in \langle 3:K \rangle \\
   - d_{i}^{(\mathrm{c})}      & \leq  0, \ i \in \langle 3:K \rangle \\
 d_{i}^{(\mathrm{c})} - d_{i}  &  \leq 0, \ i \in \langle 3:K \rangle \\
\label{eq:D_m_3_3}
\mathbf{d}(\mathcal{S}) + d_{i}  + \mathbf{d}^{(\mathrm{c})}\big( \mathcal{K}\setminus \{1,2,i\} \big) &\leq  1 + \bm{\alpha}(\mathcal{S}), \
\mathcal{S} \subseteq \{1,2\}, i \in \langle 3:K \rangle \\
\label{eq:D_m_3_4}
\mathbf{d}(\mathcal{S})   + \mathbf{d}^{(\mathrm{c})}\big( \mathcal{K}\setminus \{1,2\} \big) &\leq  1 + \bm{\alpha}\big(\mathcal{S} \setminus \{ \min \mathcal{S} \} \big), \ \mathcal{S} \subseteq \{1,2\}.
\end{align}
\end{subequations}
In the above, we use the convention that  $\mathcal{S} = \emptyset$ is a subset of $\{1,2\}$
so that the inequalities in \eqref{eq:D_m_2_N_P_4} are included in \eqref{eq:D_m_3_3}.
On the other hand, by setting $\mathcal{S} = \emptyset$ in \eqref{eq:D_m_3_4}, we obtain the inequality
$\mathbf{d}^{(\mathrm{c})}\big( \mathcal{K}\setminus \{1,2\} \big) \leq  1 $, which is implied by \eqref{eq:D_m_2_N_P_5}
and hence has no influence.
\subsection{FM Elimination: Step $k + 1$}
Guided by the first two elimination steps, we now construct the induction hypothesis.
Suppose that after $k$ steps of the FM procedure, where $k \in \langle 1 : K-2 \rangle$, the variables $d_{1}^{(\mathrm{c})},\ldots,d_{k}^{(\mathrm{c})}$
are eliminated and we are left with the following set of inequalities:
\begin{subequations}
\label{eq:D_end_of_m}
\begin{align}
 - d_{i}  &  \leq 0, \ i \in \langle 1: k \rangle \\
  d_{i} - d_{i}^{(\mathrm{c})} &\leq  \alpha_{i}, \ i \in \langle k+1:K \rangle \\
   - d_{i}^{(\mathrm{c})}      & \leq  0, \ i \in \langle k+1:K \rangle \\
 d_{i}^{(\mathrm{c})} - d_{i}  &  \leq 0, \ i \in \langle k+1:K  \rangle \\
\mathbf{d}(\mathcal{S}) + d_{i}  + \mathbf{d}^{(\mathrm{c})}\big( \mathcal{K}\setminus \big\{ \{i\}\cup\langle1:k \rangle \big\} \big) &\leq  1 + \bm{\alpha}(\mathcal{S}), \
\mathcal{S} \subseteq \langle 1:k \rangle, i \in \langle k+1:K \rangle \\
\mathbf{d}(\mathcal{S})   + \mathbf{d}^{(\mathrm{c})}\big( \mathcal{K}\setminus \langle 1:k \rangle \big) &\leq  1 + \bm{\alpha}\big(\mathcal{S} \setminus \{\min
\mathcal{S} \} \big), \ \mathcal{S} \subseteq \langle 1:k \rangle.
\end{align}
\end{subequations}
Note that the above hypothesis is consistent with the results from steps $1$ and $2$.
Next, we show that by the end of step $k+1$, the variable $d_{k+1}^{(\mathrm{c})}$ is eliminated and
we obtain a set of inequalities similar to \eqref{eq:D_end_of_m}, except that $k$ in \eqref{eq:D_end_of_m}  is replaced with $k+1$.
For this purpose, we group the inequalities in \eqref{eq:D_end_of_m} into the following three categories:
\begin{itemize}
\item Inequalities without $d_{k+1}^{(\mathrm{c})}$:
\begin{subequations}
\label{eq:D_m_plus_1_no}
\begin{align}
 - d_{i}  & \leq 0, \ i \in \langle 1: k \rangle \\
  d_{i} - d_{i}^{(\mathrm{c})} &\leq  \alpha_{i}, \ i \in \langle k+2:K \rangle \\
   - d_{i}^{(\mathrm{c})}      & \leq  0, \ i \in \langle k+2:K \rangle \\
  d_{i}^{(\mathrm{c})} - d_{i}  & \leq 0, \ i \in \langle k+2:K  \rangle \\
\label{eq:D_m_plus_1_no_3}
\mathbf{d}(\mathcal{S}) + d_{k+1} + \mathbf{d}^{(\mathrm{c})}\big( \mathcal{K}\setminus \langle 1:k+1 \rangle \big) & \leq 1 + \bm{\alpha}(\mathcal{S}), \ \mathcal{S} \subseteq \langle 1:k \rangle.
\end{align}
\end{subequations}
\item Inequalities with $-d_{k+1}^{(\mathrm{c})}$:
\begin{subequations}
\label{eq:D_m_plus_1_N}
\begin{align}
  d_{k+1} - d_{k+1}^{(\mathrm{c})} &\leq  \alpha_{k+1} \\
   - d_{k+1}^{(\mathrm{c})}      & \leq  0.
\end{align}
\end{subequations}
\item Inequalities with $+d_{k+1}^{(\mathrm{c})}$:
\begin{subequations}
\label{eq:D_m_plus_1_P}
\begin{align}
 d_{k+1}^{(\mathrm{c})} - d_{k+1}  &  \leq 0 \\
\mathbf{d}(\mathcal{S}) + d_{i}  + d_{k+1}^{(\mathrm{c})} + \mathbf{d}^{(\mathrm{c})}\big( \mathcal{K} \! \setminus \! \big\{ \{i\}\cup \langle 1:k+1 \rangle \big\} \big) & \leq \!  1 \! + \! \bm{\alpha}(\mathcal{S}),
\mathcal{S} \! \subseteq \! \langle 1:k \rangle, i \! \in \! \langle  k+2 \! : \! K  \rangle \\
\mathbf{d}(\mathcal{S}) + d_{k+1}^{(\mathrm{c})}  + \mathbf{d}^{(\mathrm{c})}\big( \mathcal{K}\setminus \langle 1:k+1 \rangle \big) &\leq  1 + \bm{\alpha}\big(\mathcal{S} \setminus \{\min
\mathcal{S} \} \big), \ \mathcal{S} \subseteq \langle 1:k \rangle.
\end{align}
\end{subequations}
\end{itemize}
Now we eliminate $d_{k+1}^{(\mathrm{c})}$ by adding the inequalities in \eqref{eq:D_m_plus_1_N} and \eqref{eq:D_m_plus_1_P},
from which we obtain
\begin{subequations}
\label{eq:D_m_plus_1_N_P}
\begin{align}
 - d_{k+1}  &  \leq 0 \\
\label{eq:D_m_plus_1_N_P_1}
\mathbf{d}(\mathcal{S}) \! + \! d_{k+1} \! + \! d_{i} \!  + \! \mathbf{d}^{(\mathrm{c})}\big( \mathcal{K} \! \setminus \! \big\{ \{i\}\cup \langle 1:k+1 \rangle \big\} \big) \! & \leq \!  1 \! + \! \bm{\alpha}(\mathcal{S} \! \cup \! \{k+1\}),
\mathcal{S} \! \subseteq \! \langle 1\!:\!k  \rangle, i \! \in \! \langle  k\!+\!2 \! : \! K  \rangle \\
\label{eq:D_m_plus_1_N_P_2}
\mathbf{d}(\mathcal{S}) + d_{i}  + \mathbf{d}^{(\mathrm{c})}\big( \mathcal{K}  \setminus  \big\{ \{i\}\cup \langle 1:k+1 \rangle \big\} \big) & \leq   1  + \bm{\alpha}(\mathcal{S}), \ \mathcal{S}  \subseteq  \langle 1:k \rangle, i \in  \langle  k+2 :  K  \rangle \\
\label{eq:D_m_plus_1_N_P_3}
\mathbf{d}(\mathcal{S}) + d_{k+1}  + \mathbf{d}^{(\mathrm{c})}\big( \mathcal{K}\setminus \langle 1:k+1 \rangle  \big) &\leq  1 + \bm{\alpha}\big(\mathcal{S} \cup  \{k+1\} \setminus \{\min
\mathcal{S} \} \big),  \mathcal{S} \subseteq \langle 1:k \rangle \\
\label{eq:D_m_plus_1_N_P_4}
\mathbf{d}(\mathcal{S})  + \mathbf{d}^{(\mathrm{c})}\big( \mathcal{K}\setminus \langle 1:k+1 \rangle  \big) &\leq  1 + \bm{\alpha}\big(\mathcal{S} \setminus \{\min
\mathcal{S} \} \big), \ \mathcal{S} \subseteq \langle 1:k \rangle.
\end{align}
\end{subequations}
After the elimination, we are left with the inequalities in \eqref{eq:D_m_plus_1_no} and \eqref{eq:D_m_plus_1_N_P}.
Next, we observe that for any $\mathcal{S} \subseteq \langle 1:k \rangle$, we have $k+1 > j$ (and hence $\alpha_{k+1} \leq \alpha_{j}$) for all $j \in \mathcal{S}$.
Therefore, it follows that
\begin{equation}
\label{eq:m_plus_1_redundant}
\bm{\alpha}\big(\mathcal{S} \cup  \{k+1\} \setminus \{\min
\mathcal{S} \} \big) = \bm{\alpha}(\mathcal{S}) + \alpha_{k+1} - \max_{j \in \mathcal{S}} \alpha_{j}  \leq \bm{\alpha}(\mathcal{S}), \ \forall \mathcal{S} \subseteq \langle 1:k \rangle.
\end{equation}
From \eqref{eq:m_plus_1_redundant}, we conclude that the inequalities in \eqref{eq:D_m_plus_1_no_3} are redundant as they are implied by the inequalities in \eqref{eq:D_m_plus_1_N_P_3}.
It follows that at the end of step $k+1$, the variable $d_{k+1}^{(\mathrm{c})}$ is eliminated and
we are left with the set of inequalities given by:
\begin{subequations}
\label{eq:D_end_of_m_plus_1}
\begin{align}
  - d_{i}  &  \leq 0, \ i \in \langle 1: k + 1 \rangle\\
  d_{i} - d_{i}^{(\mathrm{c})} &\leq  \alpha_{i}, \ i \in \langle k+2:K \rangle \\
   - d_{i}^{(\mathrm{c})}      & \leq  0, \ i \in \langle k+2:K \rangle \\
  d_{i}^{(\mathrm{c})} - d_{i}  &  \leq 0, \ i \in \langle k+2:K  \rangle \\
\label{eq:D_end_of_m_plus_1_3}
\mathbf{d}(\mathcal{S}) + d_{i}  + \mathbf{d}^{(\mathrm{c})}\big( \mathcal{K}\setminus \big\{ \{i\}\cup \langle 1:k+1 \rangle \big\} \big) &\leq  1 + \bm{\alpha}(\mathcal{S}), \
\mathcal{S} \subseteq \langle 1:k+1 \rangle, i \in \langle k+2:K \rangle \\
\label{eq:D_end_of_m_plus_1_4}
\mathbf{d}(\mathcal{S})   + \mathbf{d}^{(\mathrm{c})}\big( \mathcal{K}\setminus \langle 1:k+1 \rangle \big) &\leq  1 + \bm{\alpha}\big(\mathcal{S} \setminus \{\min
\mathcal{S} \} \big), \ \mathcal{S} \subseteq \langle 1:k+1 \rangle.
\end{align}
\end{subequations}
Note that \eqref{eq:D_end_of_m_plus_1_3} corresponds to \eqref{eq:D_m_plus_1_N_P_1} and \eqref{eq:D_m_plus_1_N_P_2},
while \eqref{eq:D_end_of_m_plus_1_4} corresponds to \eqref{eq:D_m_plus_1_N_P_3} and \eqref{eq:D_m_plus_1_N_P_4}.
It is evident that the set of inequalities in \eqref{eq:D_end_of_m_plus_1} take the same form of the set of inequalities in \eqref{eq:D_end_of_m},
with the difference that $k+1$ replaces $k$.
\subsection{FM Elimination: Step $K$}
From the above induction hypothesis, by setting $k = K-2$, it can be seen that at the end of step $k+1 = K-1$ of the FM procedure,
we obtain the following set of inequalities:
\begin{subequations}
\label{eq:D_start_of_K}
\begin{align}
  - d_{i}  &  \leq 0, \ i \in \langle 1: K - 1 \rangle \\
  d_{K} - d_{K}^{(\mathrm{c})} &\leq  \alpha_{K} \\
   - d_{K}^{(\mathrm{c})}      & \leq  0 \\
    d_{K}^{(\mathrm{c})} - d_{K}  &  \leq 0 \\
\mathbf{d}(\mathcal{S}) + d_{K}   &\leq  1 + \bm{\alpha}(\mathcal{S}), \
\mathcal{S} \subseteq \langle 1:K-1 \rangle \\
\mathbf{d}(\mathcal{S})   + d^{(\mathrm{c})}_{K} &\leq  1 + \bm{\alpha}\big(\mathcal{S} \setminus \{\min
\mathcal{S} \} \big), \ \mathcal{S} \subseteq \langle 1:K-1 \rangle.
\end{align}
\end{subequations}
Therefore, after eliminating $d_{K}^{(\mathrm{c})}$ in step $K$, we are left with the following set of inequalities:
\begin{subequations}
\label{eq:D_end_of_K}
\begin{align}
\label{eq:D_end_of_K_0}
   - d_{i}  &  \leq 0, \ i \in \mathcal{K}\\
\label{eq:D_end_of_K_1}
\mathbf{d}\big(\mathcal{S}' \cup \{K\} \big)  &\leq  1 + \bm{\alpha}(\mathcal{S}'), \
\mathcal{S}' \subseteq \langle 1:K-1 \rangle \\
\label{eq:D_end_of_K_2}
\mathbf{d}(\mathcal{S})   &\leq  1 + \bm{\alpha}\big(\mathcal{S} \setminus \{\min
\mathcal{S} \} \big), \ \mathcal{S} \subseteq \mathcal{K}.
\end{align}
\end{subequations}
Finally, we show that the set of inequalities in \eqref{eq:D_end_of_K_1} are redundant.
For $\mathcal{S}' = \emptyset$ in \eqref{eq:D_end_of_K_1}, it can be seen that the resulting inequality is included in \eqref{eq:D_end_of_K_2}.
Therefore, we consider a non-empty subset $\mathcal{S}' \subseteq \langle 1:K-1 \rangle$ in \eqref{eq:D_end_of_K_1}
and choose $\mathcal{S} = \mathcal{S}' \cup \{K\}$ in \eqref{eq:D_end_of_K_2} to obtain the corresponding inequality.
Since $K > j$ (and hence $\alpha_{K} \leq \alpha_{j}$) for all $j \in \mathcal{S}'$, we have
\begin{align}
\label{eq:K_redundant}
\bm{\alpha}\big(\mathcal{S}' \cup \{K\} \setminus \{\min \{
\mathcal{S}',K\} \} \big) & = \bm{\alpha}\big(\mathcal{S}' \cup \{K\} \setminus \{\min \mathcal{S}' \} \big)
= \bm{\alpha}(\mathcal{S}') + \alpha_{K} - \max_{j\in \mathcal{S}'}\alpha_{j}
\leq \bm{\alpha}(\mathcal{S}').
\end{align}
Hence, we conclude that the inequalities in \eqref{eq:D_end_of_K_1} are looser in general compared to the corresponding inequalities in \eqref{eq:D_end_of_K_2}.
This leaves us with \eqref{eq:D_end_of_K_2} in addition to the non-negativity conditions in \eqref{eq:D_end_of_K_0}.
Therefore, $\mathcal{D}_{\mathrm{RS}}^{[m]}$ in \eqref{eq:DoF_region_RS_a} is equivalent $\mathcal{D}^{[m]}$ in \eqref{eq:DoF_region_single_subchannel_order}, which concludes the proof.
\section*{Acknowledgement}
The authors would like to thank the anonymous reviewers for their valuable comments.
\bibliographystyle{IEEEtran}
\bibliography{References}
\end{document}